\documentclass[runningheads, envcountsame, a4paper]{llncs}
\usepackage[T1]{fontenc}
\usepackage{graphicx}
\usepackage{amsmath}
\usepackage{amssymb}
\usepackage{amsfonts}
\usepackage[hidelinks]{hyperref}
\usepackage[toc,page]{appendix}
\usepackage{subcaption}
\usepackage[utf8]{inputenc}
\usepackage[
    giveninits=true,
    backend=biber,
    url=false,
    isbn=false,
    doi=false,
    maxbibnames=20
]{biblatex}
\bibliography{bibliography.bib}
\AtEveryBibitem{%
    \clearname{editor}%
    \clearfield{copyright}%
    \clearlist{publisher}%
    \clearlist{location}%
    \clearfield{month}%
    \clearfield{pages}%
}
\usepackage[misc,geometry]{ifsym}
\usepackage{orcidlink}

\usepackage[linesnumbered,ruled,vlined]{algorithm2e}
\newcommand{\weakenuntildirect}{\ensuremath{Weaken\mathcal{U}Direct}}
\newcommand{\weakenreleasedirect}{\ensuremath{Weaken\mathcal{R}Direct}}
\newcommand{\weakenuntilleft}{\ensuremath{Weaken\mathcal{U}Left}}
\newcommand{\weakenuntilright}{\ensuremath{Weaken\mathcal{U}Right}}
\newcommand{\weakenreleaseleft}{\ensuremath{Weaken\mathcal{R}Left}}
\newcommand{\weakenreleaseright}{\ensuremath{Weaken\mathcal{R}Right}}
\newcommand{\weakenaux}{\ensuremath{Weaken}}

\usepackage{color}

\usepackage{paralist}
\usepackage{mathtools}
\usepackage{listingsutf8}
\usepackage{pmboxdraw}
\protected\def\ucrightarrow{\ensuremath{\rightarrow}}
\usepackage{tikz}
\usetikzlibrary{calc}
\usetikzlibrary{patterns}
\usetikzlibrary{arrows.meta, positioning}
\tikzset{
    state/.style={
        draw=black,
        shape=rectangle,
        minimum width=2cm,
        minimum height=1cm,
        text=black,
    },
    cex/.style={
        draw=black,
        shape=rectangle,
        minimum height=0.75cm,
        text=black,
    },
    >=Stealth,
    every edge/.append style={draw=black,->,font=\small,text=black}
}
\usepackage{pgfplots}
\pgfplotsset{compat=1.18}
\usepackage[capitalise]{cleveref}
\usepackage{xcolor}
\definecolor{fret_scope}{HTML}{9F0500}
\definecolor{fret_condition}{HTML}{FB9E00}
\definecolor{fret_component}{HTML}{68BC00}
\definecolor{fret_shall}{HTML}{000000}
\definecolor{fret_probability}{HTML}{009CE0}
\definecolor{fret_timing}{HTML}{0062B1}
\definecolor{fret_response}{HTML}{653294}

\newcommand{\bnfsep}{\;|\;}
\newcommand{\until}[1]{\;\mathcal{U}_{#1}\,}

\newcommand{\release}[1]{\;\mathcal{R}_{#1}\,}
\newcommand{\bintemporal}[1]{\,\triangle_{#1}}
\newcommand{\generally}[1]{\Box_{#1}}
\newcommand{\eventually}[1]{\Diamond_{#1}}
\newcommand{\N}{\mathbb{N}}

\newcommand{\model}{\mathcal{M}}
\newcommand{\wkn}{\sqsubseteq}
\newcommand{\MTL}{\textsf{MTL}}
\newcommand{\LTL}{\textsf{LTL}}
\newcommand{\Iorig}{I_{\mathrm{orig}}}
\newcommand{\pipre}{\pi_{\mathrm{pre}}}
\newcommand{\pisuf}{\pi_{\mathrm{suf}}}
\newcommand{\bfin}{b_\mathrm{fin}}

\SetKwProg{Function}{\textbf{function} }{}{}
\DeclareMathOperator{\rightidx}{end}
\DeclareMathOperator{\covering}{cov}
\newcommand{\rightmodifications}{\mathcal{B}_R}
\newcommand{\sat}[2]{#1,#2\vDash}
\newcommand{\nsat}[2]{#1,#2\nvDash}

\newcommand{\fret}{FRET}
\newcommand{\fretish}{\textsc{fretish}}

\newcommand{\nuXmv}{\textsc{nuXmv}}
\newcommand{\SMV}{\textsc{SMV}}
\newcommand{\spin}{\textsc{Spin}}
\newcommand{\cegiw}{\textsc{CEGIW}}

\makeatletter
\newcommand{\centereqn}[1]{%
  \refstepcounter{equation}%
  \noindent
  \makebox[\textwidth]{%
    \hfill
    \makebox[0pt][c]{$#1$}
    \hfill\tagform@{\theequation}%
  }%
}
\makeatother

\expandafter\def\expandafter\normalsize\expandafter{%
    \normalsize%
    \setlength\abovedisplayskip{4pt}%
    \setlength\belowdisplayskip{4pt}%
    \setlength\abovedisplayshortskip{4pt}%
    \setlength\belowdisplayshortskip{4pt}%
}

\begin{document}

\title{Counterexample-Guided Interval Weakening\thanks{This work is partially supported by EPSRC grant EP/Y001532/1, the CRADLE
project under EPSRC grant EP/X02489X/1 and the Royal Academy of Engineering
through both a Research Fellowship and a Chair in Emerging Technology}}

\author{Ben M.\ Andrew$^{(\textrm{\Letter})}$\orcidlink{0009-0009-8910-5899} \and
Louise A.\ Dennis\orcidlink{0000-0003-1426-1896} \and
Michael Fisher\orcidlink{0000-0002-0875-3862} \and 
Marie Farrell\orcidlink{0000-0001-7708-3877}
}

\authorrunning{B.M.\ Andrew et al.}

\institute{Department of Computer Science, University of Manchester, UK\\
\email{benjamin.andrew@manchester.ac.uk}}

\maketitle

\begin{abstract}
Systems deployed for long periods of time in dynamic environments may experience performance degradation that affects timing guarantees, even when their functional behaviour remains unchanged. In the design and verification of critical systems, such timing guarantees are often expressed using Metric Temporal Logic (\MTL{}). Under degradation, these specifications may no longer hold as stated, although weaker variants that relax timing bounds may still be satisfied and remain meaningful. For example, while an elevator may initially be required to arrive within 30 seconds of a request, degradation of its motor may only allow us to guarantee arrival within 60 seconds. Although weaker, this guarantee is still useful and allows the system to maintain a reasonable level of operation. In this paper we present \cegiw{}, an iterative, counterexample-guided algorithm for automatically weakening timing intervals in \MTL{} specifications so that they hold for a given system model. The algorithm preserves the logical structure of the original specification and weakens only interval bounds. We prove the correctness and optimality of \cegiw{}, and conduct an empirical evaluation to demonstrate the practicality of interval weakening using formalised requirements from a number of real-world case-studies. Using a model checker to produce counterexamples, \cegiw{} either identifies the strongest interval weakening under which the specification holds, or determines that no such weakening exists.

\keywords{System degradation \and Specification weakening \and Formal \\methods \and Metric temporal logic}
\end{abstract}

\section{Introduction}

Temporal properties of systems are often specified using logics such as Metric Temporal Logic~\cite{koymans1990} (\MTL{}), and these properties can be verified to hold using model-checking~\cite{cavada2014}. However, in the real world, system failures or degradation can invalidate these proofs by breaking their assumptions, in which case the desired properties may no longer hold. Yet, under degradation the system may still have some useful capabilities for reduced operation, and so \emph{logically weaker} versions of these properties may hold.
Given a degraded system and an ideal \MTL{} property that does not hold in it, we aim to derive the strongest possible logical \emph{weakening} of that property. To constrain the search space, we focus on modifying the intervals of \MTL{} formulae while preserving the structural form of the specification. For example, we may want an elevator to always arrive at least 30 seconds after calling it, represented by
\begin{equation}
\Box(\texttt{callElevator}\rightarrow\eventually{[0,30]}\texttt{elevatorArrives})
\end{equation}
(where $\Box$ is the \textit{always} operator and $\Diamond_{[0,30]}$ is the \textit{eventually} operator bounded between zero and thirty time units). However, if the main motor breaks, a weaker backup motor may start, slowing the system down. In this case the ideal property may not hold, and we may only be able to guarantee that the elevator will arrive within 60 seconds, represented by
\begin{equation}
\Box(\texttt{callElevator}\rightarrow\eventually{[0,60]}\texttt{elevatorArrives}).
\end{equation}
This property is logically weaker than the original, but still guarantees a useful level of functionality. We would like to be able to derive this new property automatically from the system model and the original property.

\textbf{Related work.}
Many works consider \emph{unrealisable} sets of requirements --- where conflicts mean that no satisfying implementation exists --- solving the problem by weakening specifications. Some use counterstrategies to strengthen assumptions~\cite{alur2013a,maoz2019} in the \LTL{} fragment $GR(1)$, while others use heuristic-guided genetic algorithms to mutate assumptions and guarantees towards realisability~\cite{brizzio2023}. However, this is different from the problem of weakening specifications relative to an existing implementation, which we are concerned with. We use a counterexample-guided approach, which has been applied to a large variety of problems including abstraction refinement~\cite{clarke2000,aarts2012,howar2011}, program synthesis~\cite{alur2013}, and learning assumptions for compositional verification~\cite{cobleigh2003}, but not yet to the problem of specification repair in the presence of an existing implementation. This has been explored using techniques from the field of program repair~\cite{gazzola2018}, typically heuristically-guided \emph{generate-and-validate} approaches like mutation-based repairs~\cite{cerqueira2022} and dynamic invariant detection~\cite{abreu2023a}. We, however, are concerned with correct-by-construction, \emph{semantics-driven} approaches, which have only been explored in the case of propositional logic specifications~\cite{andrew2026}.

\textbf{Contribution.}
We present our Counterexample-Guided Interval Weakening (\cegiw) algorithm that, given a degraded system and a desired \MTL{} property that does not hold on the system, produces a new optimal \MTL{} property that both is weakening of the original property and holds in the degraded system. The weakening is optimal with respect to a formally defined interval order, ensuring that no strictly stronger interval weakening satisfies the degraded system. We use a counterexample-guided approach, generating counterexamples with the \nuXmv{} model checker~\cite{cavada2014}, weakening the property to hold on the counterexamples, and iteratively weakening in this way until the property holds in the system. This approach is aimed at engineers in the design phase of safety-critical systems, who are trying to understand how resilient the timing properties of their system are to various proposed degradations, and how the system's formal guarantees are thus impacted.

The paper is organised as follows: \cref{sec:contexts} sets up the weakening of \MTL{} formulae within contexts, \cref{sec:algorithm} describes \cegiw{} and proves its correctness and optimality, \cref{sec:evaluation} demonstrates \cegiw{} on an example and considers its usefulness in real-world case-studies, and \cref{sec:conclusion} concludes and outlines future work.

\section{Weakening Within Contexts}\label{sec:contexts}

We briefly state the syntax and semantics of Metric Temporal Logic~\cite{koymans1990} (\MTL{}). Let $\mathcal{P}$ be a set of propositional variables. Well-formed \MTL{} formulae are formed according to the rule:
\begin{equation}
    \phi := p \bnfsep \top \bnfsep \neg \phi \bnfsep \phi \land \phi \bnfsep \phi \until{I} \phi \bnfsep \phi \release{I} \phi
\end{equation}
where $p\in\mathcal{P}$ and $I$ is an interval, $[a,b]$, for $a\in\N$ and $b\in\N\cup\{\infty\}$ and $a\leq b$. Other constructs can be defined as usual, e.g. $\eventually{I}\phi\equiv\top\until{I}\phi$. We consider \MTL{} formulae with a pointwise semantics over the set of natural numbers~\cite{alur1993}, defined according to a trace $\pi$ which is an infinite sequence of states in which atomic propositions can hold, and an index of the trace $t \in \N$. The set of atomic propositions that hold in the $t$-th state is denoted by $\pi(t)$. A trace $\pi$ satisfies an \MTL{} formula $\phi$, denoted by $\pi\vDash\phi$, if and only if $\sat{\pi}{0}\phi$.
\begin{align*}
    \pi, t & \vDash p && \text{iff}&& p \in \pi(t) \\
    \pi, t & \vDash \neg \phi && \text{iff}&& \pi, t \nvDash \phi \\
    \pi, t & \vDash \phi_1 \land \phi_2 && \text{iff}&& \pi, t \vDash \phi_1 \text{ and } \pi, t \vDash \phi_2 \\
    \pi, t & \vDash \phi_1 \until{I} \phi_2 && \text{iff}&& \exists i \in I.\, ((\pi, t+i \vDash \phi_2) \land \forall j \in [0,i) \cap I.\, (\pi, t+j \vDash \phi_1)) \\
    \pi, t & \vDash \phi_1 \release{I} \phi_2 && \text{iff}&& \forall i \in I.\, (\pi, t+i \vDash \phi_1) \\
    & && && \lor \exists j \in I.\, (\pi, t+j \vDash \phi_2 \land \forall i \in [0,i] \cap I.\, (\pi, t+i \vDash \phi_1))
\end{align*}
Note that $\phi_1\release{I}\phi_2\equiv\neg(\neg\phi_1\until{I}\neg\phi_2)$. \cegiw{} weakens a constituent subformula of a larger formula. We show, using the notion of \emph{contexts}, that a weakening of a subformula implies a weakening of the larger formula.
\begin{definition}[Contexts]
\MTL{} Contexts are like \MTL{} formulae with a single hole $[-]$, and are formed according to the rule:
\begin{equation}
C ::= [-] \bnfsep C \land \phi \bnfsep \phi \land C \bnfsep C \lor \phi \bnfsep \phi \lor C \bnfsep C \until{I} \phi \bnfsep \phi \until{I} C \bnfsep C \release{I} \phi \bnfsep \phi \release{I} C
\end{equation}
where $\phi$ is an \MTL{} formula and $I$ is an interval. Our definition of contexts does not allow negations on the path to the hole $[-]$, similarly to the restriction imposed by negation normal form (NNF). However, adjacent \MTL{} subformulae $\phi$ are not required to be in NNF and can contain negations.
\end{definition}
We define the notion of \emph{context substitution}, where an \MTL{} formula $\psi$ is substituted into the hole of a context $C$ to produce an \MTL{} formula $C[\psi]$. By pushing negations inwards, an arbitrary subformula $\psi$ of an \MTL{} formula $\phi$ can always be extracted to get a context $C$ where $\phi$ is logically equivalent to $C[\psi]$.

\makeatletter
\refstepcounter{equation}%
\noindent\begin{minipage}{\textwidth}
\begin{minipage}{.5\linewidth}
\begin{align*}
[-][\psi] &= \psi \\
(C\land\phi)[\psi] &= C[\psi]\land\phi \\
(\phi\land C)[\psi] &= \phi\land C[\psi] \\
(C\lor\phi)[\psi] &= C[\psi]\lor\phi \\
(\phi\lor C)[\psi] &= \phi\lor C[\psi]
\end{align*}
\end{minipage}%
\begin{minipage}{.5\linewidth}
\begin{align*}
\\
(C\until{I}\phi)[\psi] &= C[\psi]\until{I}\phi \\
(\phi\until{I} C)[\psi] &= \phi\until{I} C[\psi] \\
(C\release{I}\phi)[\psi] &= C[\psi]\release{I}\phi \\
(\phi\release{I} C)[\psi] &= \phi\release{I} C[\psi]
\end{align*}
\end{minipage}%
\makebox[0pt][r]{\tagform@{\theequation}}%
\label{eqn:my-equation}%
\end{minipage}
\makeatother
\vspace{\baselineskip}

\begin{definition}[Weakening and strengthening of \MTL{} formulae]
    Let $\phi$ and $\phi'$ be \MTL{} formulae. $\phi'$ is a weakening of $\phi$, denoted
    \begin{equation}
    \phi\wkn\phi'
    \end{equation}
    \noindent if and only if, for all traces $\pi$ and time-points $t$, if $\sat{\pi}{t}\phi$, then $\sat{\pi}{t}\phi'$. In this case, symmetrically, $\phi$ is a strengthening of $\phi'$. Note that an \MTL{} formula $\phi$ is always both a strengthening and a weakening of itself, i.e. $\phi\wkn\phi$.
\end{definition}

\begin{theorem}[Weakening of contexts]\label{theorem:context-wkn}
Let $C$ be a context and $\psi$ and $\psi'$ be \MTL{} formulae. If $\psi\wkn\psi'$, then $C[\psi]\wkn C[\psi']$.
\end{theorem}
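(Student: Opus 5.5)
We argue by structural induction on the context $C$, proving the claim simultaneously for all pairs $\psi\wkn\psi'$. The quantification over \emph{all} time-points $t$ in the definition of $\wkn$ is what makes the induction go through. Under temporal operators the hole is evaluated at shifted positions $t+i$ and $t+j$. The induction hypothesis must therefore be available at arbitrary positions, not just at $0$.

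\textbf{Base and Boolean cases.} For $C=[-]$ we have $C[\psi]=\psi$ and $C[\psi']=\psi'$, so the claim is exactly the hypothesis $\psi\wkn\psi'$. For $C=C_1\land\phi$, suppose $\sat{\pi}{t}C_1[\psi]\land\phi$. Then $\sat{\pi}{t}C_1[\psi]$, and by the induction hypothesis $C_1[\psi]\wkn C_1[\psi']$ we get $\sat{\pi}{t}C_1[\psi']$. The conjunct $\phi$ is untouched. The disjunction cases are analogous: if the disjunct satisfied is $\phi$ there is nothing to show, and otherwise we apply the induction hypothesis. The symmetric cases $\phi\land C_1$ and $\phi\lor C_1$ are identical. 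Since $\lor$ is not primitive in the grammar, we read $\lor$ through its usual semantics, i.e.\ $\sat{\pi}{t}\phi_1\lor\phi_2$ iff one disjunct holds. We note that contexts never place the hole under $\neg$; this restriction is exactly what makes every case monotone.

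\textbf{Temporal cases.} For $C=C_1\until{I}\phi$, take $\sat{\pi}{t}C_1[\psi]\until{I}\phi$. This gives a witness $i\in I$ with $\sat{\pi}{t+i}\phi$ and $\sat{\pi}{t+j}C_1[\psi]$ for all $j\in[0,i)\cap I$. We keep the same witness $i$ and apply the induction hypothesis at each time-point $t+j$ to get $\sat{\pi}{t+j}C_1[\psi']$. For $\phi\until{I}C_1$ we keep the same $i$ and apply the induction hypothesis at $t+i$ only. The release cases split along the two disjuncts of the semantics. In the ``always'' disjunct we apply the induction hypothesis pointwise at every $t+i$, $i\in I$, whenever the hole sits on the corresponding side. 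In the ``released'' disjunct we keep the same witness $j$ and apply the induction hypothesis at $t+j$ and/or at the points $t+i$ with $i\in[0,j]\cap I$. Here we read the evident typo $[0,i]$ as $[0,j]$.

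\textbf{Expected obstacle.} There is no deep obstacle; the main care points are bookkeeping. First, the induction hypothesis must be stated for all $\pi,t$, since it is used at shifted indices. Second, in each temporal case the witnesses ($i$ or $j$) are reused unchanged, so that only the positions where the hole appears need the induction hypothesis. Third, the release semantics should be checked carefully. It might seem simpler to derive the release cases from the until case via $\phi_1\release{I}\phi_2\equiv\neg(\neg\phi_1\until{I}\neg\phi_2)$, but that would pass through negation and reverse the direction of $\wkn$. So we treat release directly from its positive semantic clause, as above.
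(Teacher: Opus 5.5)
Your proposal is correct and follows the same route as the paper: structural induction on the context grammar, keeping the semantic witnesses in the $\mathcal{U}$ and $\mathcal{R}$ cases unchanged, and invoking the induction hypothesis at the shifted time-points, which is available because $\wkn$ quantifies over all $\pi$ and $t$. You also spell out the Boolean cases the paper omits. Your side-agnostic handling of the two release disjuncts is cleaner than the paper's version, and it holds whichever way the operand roles in the printed release clause are read.
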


\begin{proof}
We do a proof by induction over the grammar of contexts using the induction hypothesis $P(C)$, that $C[\psi]\wkn C[\psi']$. The non-temporal inductive cases are omitted for brevity.
\begin{description}
\item \textsc{Base case $[-]$:} We assume that $\psi\wkn\psi'$, and by the definition of context substitution we have that $[-][\psi]\wkn[-][\psi']$ and thus $P([-])$.

\item \textsc{Inductive case $C\until{I}\phi$:} Assuming $P(C)$, we take an arbitrary trace $\pi$ and time-point $t$, assume $\pi,t\vDash C[\psi]\until{I}\phi$, and want to prove $\pi,t\vDash C[\psi']\until{I}\phi$. We know that there exists an $i\in I$ such that $\pi,t+i\vDash\phi$, and that for all $j\in[0,i)\cap I$ we have $\pi,t+j\vDash C[\psi]$. Taking arbitrary $i$ and $j$, by the induction hypothesis we have that $\pi,t+j\vDash C[\psi']$, and so by the semantics $\pi,t\vDash C[\psi']\until{I}\phi$. Thus, we have $P(C\until{I}\phi)$.

\item \textsc{Inductive case $\phi\until{I}C$:} Similar to the above case.

\item \textsc{Inductive case $C\release{I}\phi$:} Assuming $P(C)$, we take an arbitrary trace $\pi$ and time-point $t$, assume $\pi,t\vDash C[\psi]\release{I}\phi$, and want to prove $\pi,t\vDash C[\psi']\release{I}\phi$. By the semantics of $\mathcal{R}$ there are two cases:
\begin{enumerate}
    \item For all $i\in I$ we have $\pi,t+i\vDash\phi$, thus we have $\pi,t+i\vDash C[\psi']$, and so we have $\pi,t\vDash C[\psi']\release{I}\phi$.

    \item There exists an $i\in I$ such that $\pi,t+i\vDash C[\psi]$, and that for all $j\in[0,i]\cap I$ we have $\pi,t+j\vDash\phi$. Taking arbitrary $i$ and $j$, by the assumptions we have that $\pi,t+i\vDash C[\psi']$ and $\pi,t+j\vDash\phi$, and then by the semantics we have $\pi,t\vDash C[\psi']\release{I}\phi$.
\end{enumerate}
Thus, in both cases we have $P(C\release{I}\phi)$.

\item \textsc{Inductive case $\phi\release{I}C$:} Similar to the above case.\qed
\end{description}
\end{proof}

We show that, depending on which temporal operator is used, by expanding or contracting its interval we can weaken or strengthen the surrounding formula.

\begin{definition}[Right-bound modifications of intervals]\label{def:interval-extension}
    Let $I=[a,b]$ be an interval. For any $i\in\N$, a right-bound modification of $I$ is either a right-bound extension $[a,b+i]$, or, provided $i\leq b - a$, a right-bound contraction $[a,b-i]$. A right-bound modification is \emph{strict} if $i>0$. The set of all right-bound modifications of $I$ is denoted $\rightmodifications(I)$.
\end{definition}

\begin{lemma}[Weakening of $\mathcal{U}$ interval]\label{lemma:until-i}
    Let $\phi$ and $\psi$ be \MTL{} formulae, and $I$ and $I'$ be intervals, where $I'$ is a right-bound extension of $I$. Then, $\phi\until{I}\psi\wkn\phi\until{I'}\psi$.
\end{lemma}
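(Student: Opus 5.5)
The plan is to argue directly from the pointwise semantics of $\mathcal{U}$, unfolding the definition of $\wkn$. Let $I=[a,b]$ and let $I'=[a,b+k]$ for some $k\in\N$ be the given right-bound extension, as in \cref{def:interval-extension}. If $b=\infty$ then $I'=I$, and the claim reduces to the reflexivity $\phi\until{I}\psi\wkn\phi\until{I}\psi$ noted after the definition of weakening. So we may assume $b\in\N$. We fix an arbitrary trace $\pi$ and time-point $t$, assume $\pi,t\vDash\phi\until{I}\psi$, and aim to show $\pi,t\vDash\phi\until{I'}\psi$.

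From the assumption we obtain a witness $i\in I$ with $\pi,t+i\vDash\psi$ and with $\pi,t+j\vDash\phi$ for all $j\in[0,i)\cap I$. We reuse the same $i$ as the witness for the extended formula. Two facts have to be checked. First, $i\in I'$, which holds because both intervals have left bound $a$ and $b\le b+k$, so $I\subseteq I'$. Second, the obligation set for $\phi$ does not grow, that is, $[0,i)\cap I'=[0,i)\cap I$. Both sets equal $[a,i)$, since $i\le b$ means that the part of $I'$ lying beyond $b$ is never below $i$. The $\phi$-obligations for $I'$ are therefore exactly those already provided by the assumption, and the semantics give $\pi,t\vDash\phi\until{I'}\psi$.

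The only step that needs care is the second fact. Because the semantics intersect the prefix $[0,i)$ with the interval, the obligations on $\phi$ depend on $I$, so it is not automatic that enlarging the interval leaves them unchanged. Here it works because only the right bound moves and the witness lies at or below the old right bound, so the intersection is unaffected. A left-bound change would not behave this way, which is presumably why the paper restricts to right-bound modifications.
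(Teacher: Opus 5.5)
Your proof is correct and follows the same route as the paper: fix $\pi$ and $t$, reuse the witness $i\in I$ for the extended interval, and observe that $i\in I'$. You also check explicitly that $[0,i)\cap I'=[0,i)\cap I=[a,i)$ because $i\le b$, and this is the inclusion the universal $\phi$-obligation actually requires; the paper only remarks that $i,j\in I'$, which is the opposite direction, so your version is the more careful of the two.
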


\begin{proof}
    We assume that $I'$ is a right-bound extension of $I$, and so, taking an arbitrary trace $\pi$ and time-point $t$, we assume $\pi,t\vDash \phi\until{I}\psi$ and want to prove $\pi,t\vDash \phi\until{I'}\psi$. We know that there exists an $i\in I$ such that $\pi,t+i\vDash\psi$, and that for all $j\in [0,i)\cap I$ we have $\pi,t+j\vDash\phi$. Taking arbitrary $i$ and $j$, we have that $i,j\in I'$, and so $\pi,t\vDash \phi\until{I'}\psi$. Thus, $\phi\until{I}\psi\wkn\phi\until{I'}\psi$.
    \qed
\end{proof}

\begin{lemma}[Weakening of $\mathcal{R}$ interval]\label{lemma:release-i}
    Let $\phi$ and $\psi$ be \MTL{} formulae, and $I$ and $I'$ be intervals, where $I'$ is a right-bound contraction of $I$. Then, $\phi\release{I}\psi\wkn\phi\release{I'}\psi$.
\end{lemma}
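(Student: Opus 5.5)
Fix $I=[a,b]$ and $I'=[a,b-i]$ with $i\le b-a$, so that $I'\subseteq I$ and both intervals have the same left endpoint $a$. Take an arbitrary trace $\pi$ and time-point $t$, assume $\pi,t\vDash\phi\release{I}\psi$, and show $\pi,t\vDash\phi\release{I'}\psi$. Following the paper's semantics of $\mathcal{R}$, I will split on the two disjuncts of the hypothesis, as in the $C\release{I}\phi$ case of \cref{theorem:context-wkn}.

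\emph{First disjunct.} Suppose $\pi,t+k\vDash\phi$ holds for all $k\in I$. Since $I'\subseteq I$, it holds for all $k\in I'$. The first disjunct for $I'$ then holds directly.

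\emph{Second disjunct.} Suppose there is a witness $j\in I$ with $\pi,t+j\vDash\psi$ and $\pi,t+k\vDash\phi$ for all $k\in[0,j]\cap I$. There are two subcases, depending on where $j$ lies relative to the new right bound $b-i$.
\begin{enumerate}
\item If $j\le b-i$, then $j\in I'$. Also $[0,j]\cap I'=[0,j]\cap I$, because both intervals start at $a$ and $j\le b-i$. So the same $j$ witnesses the second disjunct for $I'$.
\item If $j>b-i$, then every $k\in I'$ satisfies $k\le b-i<j$ and $k\in I$. Hence $k\in[0,j]\cap I$, so $\pi,t+k\vDash\phi$. The first disjunct for $I'$ therefore holds.
\end{enumerate}

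Either way $\pi,t\vDash\phi\release{I'}\psi$, which gives $\phi\release{I}\psi\wkn\phi\release{I'}\psi$.

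The main obstacle is the second subcase: a witness that falls outside the contracted interval must be turned into the universal disjunct. This requires the common left endpoint, so that $I'$ is an initial segment of $I$. A cleaner alternative is to use the stated duality $\phi\release{I}\psi\equiv\neg(\neg\phi\until{I}\neg\psi)$. By \cref{lemma:until-i}, $\neg\phi\until{I'}\neg\psi\wkn\neg\phi\until{I}\neg\psi$, since $I$ is a right-bound extension of $I'$. Weakening reverses under negation, which gives the result immediately. I would present the direct argument and mention the dual one as a check.
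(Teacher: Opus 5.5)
Your direct argument is correct and is essentially the paper's proof. The universal disjunct restricts to $I'\subseteq I$. A release witness $j$ is either reused when $j\in I'$, or, when $j>b-i$, gives $[0,j]\cap I'=I'$ and hence the universal disjunct for $I'$.

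The only difference is that you read the displayed semantics literally, with $\phi$ held and $\psi$ as releaser. The paper's proof instead takes $\psi$ as held and $\phi$ as releaser, as does the duality $\phi\release{I}\psi\equiv\neg(\neg\phi\until{I}\neg\psi)$ that you use as a check. This is just a renaming, so both your arguments go through, but strictly they rest on two different readings of $\mathcal{R}$.
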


\begin{proof}
    We assume that $I'$ is a right-bound contraction of $I$, and so, taking an arbitrary trace $\pi$ and time-point $t$, we assume $\pi,t\vDash \phi\release{I}\psi$ and want to prove $\pi,t\vDash \phi\release{I'}\psi$. By the semantics of $\mathcal{R}$ there are two cases:
    \begin{enumerate}
        \item For all $t'\in I$ we have $\pi,t+t'\vDash\psi$. Then, as $I'\subseteq I$, we know that for all $t''\in I'$ we have $\pi,t+t''\vDash\psi$, and so $\pi,t\vDash \phi\release{I'}\psi$.
        
        \item There exists a $t'\in I$ such that $\pi,t+t'\vDash\phi$ and for all $t''\in I\cap[0,t']$, we have $\pi,t+t''\vDash\psi$. As $I'$ is a right-bound contraction of $I$, there are two further cases:
        
        \begin{enumerate}
            \item If $t'\in I'$, then we still have that $\pi,t+t'\vDash\phi$ and for all $t''\in I'\cap[0,t']$, we have $\pi,t+t''\vDash\psi$, and so $\pi,t\vDash \phi\release{I'}\psi$.
    
            \item If $t'\notin I'$, then $I'\cap[0,t']=I'$ and so we know that for all $t''\in I'$, we have $\pi,t+t''\vDash\psi$, and so $\pi,t\vDash \phi\release{I'}\psi$.
        \end{enumerate}
    \end{enumerate}
    Thus, in all cases we have that $\phi\release{I}\psi\wkn\phi\release{I'}\psi$.
    \qed
\end{proof}

Often in \cegiw{}, recursive calls will generate a set of intervals from which either the strongest or weakest must be chosen. We show that there is a total order of implication over the set of right-bound modifications of an interval, which allows us to make that choice.

\begin{lemma}[Extension-weakening order of right-bound modifications]\label{lemma:extension-weakening-order}
    Let $I$ be an interval. Then, $\rightmodifications(I)$ has a total order $\supseteq$, where for all \MTL{} contexts $C$, \MTL{} formulae $\phi$ and $\phi'$, and all $I',I''\in\rightmodifications(I)$, if $I''\supseteq I'$ then we have $C[\phi\until{I'}\phi']\wkn C[\phi\until{I''}\phi']$.
\end{lemma}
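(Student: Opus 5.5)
Two things need to be established: that $\supseteq$ is a total order on $\rightmodifications(I)$, and that it is compatible with weakening of $\mathcal{U}$-formulae inside any context.

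\textbf{Totality.} Every element of $\rightmodifications(I)$ has the form $[a,c]$. Here $a$ is the fixed left endpoint of $I=[a,b]$, and $c\in\N\cup\{\infty\}$ with $c\geq a$: either $c=b+i$ for an extension or $c=b-i$ for a contraction. Note that if $b=\infty$ then every modification equals $[a,\infty]$, so the set is a singleton and trivially totally ordered. Otherwise the map $[a,c]\mapsto c$ is injective. Under this map, set inclusion $[a,c']\supseteq[a,c]$ holds iff $c'\geq c$. So $\supseteq$ on $\rightmodifications(I)$ is isomorphic to $\geq$ on a subset of $\N\cup\{\infty\}$. That order is reflexive, antisymmetric, transitive and total, and these properties transfer directly.

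\textbf{Compatibility with weakening.} Take $I',I''\in\rightmodifications(I)$ with $I''\supseteq I'$. Write $I'=[a,c']$ and $I''=[a,c'']$ with $c''\geq c'$.
\begin{enumerate}
\item If $c''=c'$, the two intervals are equal, and reflexivity of $\wkn$ gives the claim.
\item Otherwise $c''=c'+k$ for some $k>0$ (or $c''=\infty$). Then $I''$ is a right-bound extension of $I'$ in the sense of \cref{def:interval-extension}, applied with $I'$ as the base interval. Here the point to check is that "extension" is relative to any interval, not only to $I$. The case $c''=\infty$ needs a small remark: \cref{def:interval-extension} only permits $i\in\N$. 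For this case I would either re-run the argument of \cref{lemma:until-i} directly, since that proof only uses $I'\subseteq I''$ with equal left endpoints, or note that the membership argument there is unchanged.
\end{enumerate}
In case 2, \cref{lemma:until-i} then gives $\phi\until{I'}\phi'\wkn\phi\until{I''}\phi'$. Applying \cref{theorem:context-wkn} with $\psi=\phi\until{I'}\phi'$ and $\psi'=\phi\until{I''}\phi'$ yields $C[\phi\until{I'}\phi']\wkn C[\phi\until{I''}\phi']$ for every context $C$.

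\textbf{Main obstacle.} The only delicate step is the bookkeeping for infinite right bounds. This covers both $b=\infty$ and a comparison between a finite modification and $[a,\infty]$. I would handle it by observing that the proof of \cref{lemma:until-i} uses only $I'\subseteq I''$ and the equality of left endpoints, which also makes $[0,i)\cap I'\subseteq[0,i)\cap I''$. That observation covers the infinite case verbatim.
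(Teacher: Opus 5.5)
Your proposal is correct and follows the paper's proof. It applies \cref{lemma:until-i} with $I'$ as the base interval and lifts through the context by \cref{theorem:context-wkn}, and your right-endpoint map makes explicit the totality the paper only gestures at with ``(or the reverse)''. The infinite-bound case you flag as the main obstacle never arises (for finite $b$ every $b+i$ with $i\in\N$ is finite, and for $b=\infty$ you already noted $\rightmodifications(I)$ is a singleton), which is fortunate since the inclusion you cite there is the wrong direction: the argument needs $[0,i)\cap I''\subseteq[0,i)\cap I'$, which holds because the left endpoints agree and the witness satisfies $i\le c'$.
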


\begin{proof}
    For any pair of intervals $I'$ and $I''$ in $\rightmodifications(I)$ we can order the resulting subformulae by applying \cref{lemma:until-i} to get $\phi\until{I'}\phi'\wkn\phi\until{I''}\phi'$ (or the reverse), and then order the full formulae with their contexts by applying \cref{theorem:context-wkn} to get $C[\phi\until{I'}\phi']\wkn C[\phi\until{I''}\phi']$ (or the reverse).
    \qed
\end{proof}

\begin{lemma}[Contraction-weakening order of right-bound modifications]\label{lemma:contraction-weakening-order}
    Let $I$ be an interval. Then, $\rightmodifications(I)$ has a total order $\subseteq$, where for all \MTL{} contexts $C$, \MTL{} formulae $\phi$ and $\phi'$, and all $I',I''\in\rightmodifications(I)$, if $I''\subseteq I'$ then we have $C[\phi\release{I'}\phi']\wkn C[\phi\release{I''}\phi']$.

\end{lemma}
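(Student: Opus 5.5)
We prove this by mirroring the proof of \cref{lemma:extension-weakening-order}, with \cref{lemma:release-i} in place of \cref{lemma:until-i}. There are two parts: first show that $\subseteq$ is a total order on $\rightmodifications(I)$, then show that it is compatible with weakening of release formulae inside any context.

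\textbf{Totality.} Every element of $\rightmodifications(I)$ has the form $[a,c]$ with the same left bound $a$ as $I=[a,b]$, and with $c\in\N\cup\{\infty\}$, $c\geq a$. For two such intervals $[a,c]$ and $[a,c']$ we have $[a,c]\subseteq[a,c']$ iff $c\leq c'$. So inclusion on $\rightmodifications(I)$ is order-isomorphic to $\leq$ on right endpoints, which is a total order on $\N\cup\{\infty\}$. Reflexivity, antisymmetry, transitivity and comparability are therefore immediate.

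\textbf{Compatibility with weakening.} Take $I',I''\in\rightmodifications(I)$ with $I''\subseteq I'$. If $I''=I'$, the claim holds by reflexivity of $\wkn$. Otherwise $I'=[a,c']$ and $I''=[a,c'']$ with $a\leq c''<c'$. Then $I''$ is a right-bound contraction of $I'$, namely $[a,c'-i]$ with $i=c'-c''\leq c'-a$, which satisfies the side condition of \cref{def:interval-extension}.

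The one delicate point is $c'=\infty$, where $c'-i$ is not a natural-number subtraction. The proof of \cref{lemma:release-i} only uses $I''\subseteq I'$ and the fact that both intervals share the left bound $a$. So we invoke it directly in that form: the case analysis ($t'\in I''$ or $t'\notin I''$, the latter giving $I''\cap[0,t']=I''$) goes through unchanged.

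Applying \cref{lemma:release-i} gives $\phi\release{I'}\phi'\wkn\phi\release{I''}\phi'$. Then \cref{theorem:context-wkn}, with $\psi=\phi\release{I'}\phi'$ and $\psi'=\phi\release{I''}\phi'$, lifts this to $C[\phi\release{I'}\phi']\wkn C[\phi\release{I''}\phi']$ for any context $C$.

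The main obstacle is only this bookkeeping: confirming that any two members of $\rightmodifications(I)$ are related by a right-bound contraction in the sense required by \cref{lemma:release-i}, including the unbounded case. Everything else is a direct appeal to earlier results.
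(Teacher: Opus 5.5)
Your proof is correct and follows the paper's route. The paper's proof reruns the argument of \cref{lemma:extension-weakening-order} with \cref{lemma:release-i} in place of \cref{lemma:until-i}: it orders the two release subformulae and then lifts the ordering through $C$ via \cref{theorem:context-wkn}, exactly as you do. Your explicit totality check and the $c'=\infty$ bookkeeping are extra care the paper omits; the latter is harmless, and for finite $b$ the mixed case cannot even arise, since every member of $\rightmodifications(I)$ then has a finite right bound.
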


\begin{proof}
    Similar to the proof of \cref{lemma:extension-weakening-order}, but uses \cref{lemma:release-i} to order $\phi\release{I'}\phi'$ and $\phi\release{I''}\phi'$.\qed
\end{proof}

\section{Algorithm for Interval Weakening}\label{sec:algorithm}

\cegiw{} is split into two levels. At the top-level, there is an iterative process that finds counterexample traces to the \MTL{} formula $\phi$ by model checking (\cref{subsec:iterative-weakening}). At each iteration, once a counterexample trace $\pi$ is found, we weaken a given interval in $\phi$ such that the new formula $\phi'$ holds on $\pi$ (\cref{subsec:cex-relative-weakening}). However, this does not guarantee that $\phi'$ holds on the model itself, and so we need to repeat the process, finding a new counterexample trace for $\phi'$ and weakening the interval again, driven by the iterative process. We finish once we produce a $\phi'$ that holds on the model.

\subsection{Iterative Weakening}\label{subsec:iterative-weakening}

Assume that we have an \MTL{} formula $\phi$ that has a temporal subformula $\psi\bintemporal{I}\psi'$ with an interval $I$ that we want to weaken. $\phi$ can be split into $\psi\bintemporal{I}\psi'$ and the surrounding \MTL{} context $C$, such that $\phi$ is logically equivalent to $C[\psi\bintemporal{I}\psi']$. Assume we also have a transition system $\model{}$. Using a model checker, we check whether $\phi$ holds on $\model{}$. If it holds then we are done, but if not, we will receive a counterexample trace $\pi$ through $\model{}$ for which $\pi\nvDash C[\psi\bintemporal{I}\psi']$. We can then weaken on this counterexample with
\begin{equation}
    I'=\weakenaux(C,\psi\bintemporal{I}\psi',\pi,0)
\end{equation}
which is described in \cref{subsec:cex-relative-weakening}. By \cref{theorem:weakening-algorithm-correctness}, if $I'=None$, then there exists no weakening $I''$ of $I$ such that $\pi\vDash C[\psi\bintemporal{I''}\psi']$, and so the same holds for the model $\model{}$. Otherwise, $I'$ is an interval such that $\pi\vDash C[\psi\bintemporal{I'}\psi']$. However, $C[\psi\bintemporal{I'}\psi']$ does not necessarily hold on $\model{}$, and so we model check again, creating an iterative loop that ends when we either produce an interval $I''$ that is a weakening of $I$ such that $C[\psi\bintemporal{I''}\psi']$ holds on $\model{}$, or show that no such weakening exists.

\subsection{Weakening on a Counterexample}\label{subsec:cex-relative-weakening}

Model checkers generally produce a specific type of infinite counterexample trace, called a \emph{lasso trace}.

\begin{definition}[Lasso traces]
A trace $\pi$ is \emph{lasso} if it can be separated into a finite prefix $\pipre$ and an infinitely repeating finite suffix $\pisuf$, forming
\begin{equation}
\pi=\pipre(\pisuf)^\omega.
\end{equation}
This restricts us to a subset of infinite traces that can be finitely represented. The finite length of a lasso trace is then defined as $|\pi|=|\pipre| + |\pisuf|$. Both $\pipre$ and $\pisuf$ must be minimal.
\end{definition}
We show that we can prove properties of an entire infinite lasso trace using only a finite \emph{covering interval}. Without this, we may need to iterate over the entire infinite trace, impacting completeness.
\begin{definition}[Covering intervals]
The suffix-covering interval of $\pi$, defined with respect to an interval $[a,b]$, is
\begin{equation}
\covering_\pi([a,b]) = [a, \min(b,\rightidx_\pi(a))]
\end{equation}
where we specify a finite \emph{end} of the infinite trace with
\begin{equation}
\rightidx_\pi(a) =
\begin{cases*}
  |\pi|        & if $a<|\pipre|$ \\
  a+|\pisuf|-1 & otherwise.
\end{cases*}
\end{equation}
\end{definition}

\begin{lemma}[Lasso trace coverage]\label{lemma:lasso-trace-coverage}
    Let $\phi$ be an \MTL{} formula, $\pi$ be a lasso trace, and $a\in\N$. If for all $t\in[a,\rightidx_\pi(a)]$ we have $\sat{\pi}{t}\phi$, then for all $t'\in\N$ with $t'\geq a$ we have $\sat{\pi}{t'}\phi$.
\end{lemma}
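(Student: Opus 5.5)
The plan is to exploit the periodicity of a lasso trace: from position $|\pipre|$ onward, $\pi$ repeats with period $|\pisuf|$. Satisfaction of any \MTL{} formula at a point should inherit this periodicity. The window $[a,\rightidx_\pi(a)]$ is then shown to contain a complete period lying inside the periodic part, and every $t'\geq a$ beyond the window is mapped back into that period.

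\textbf{Step 1 (suffix invariance).} Write $\pi^t$ for the suffix of $\pi$ starting at $t$, so $\pi^t(k)=\pi(t+k)$. I will prove, by structural induction on $\phi$, that $\sat{\pi}{t}\phi$ holds if and only if $\sat{\pi^t}{0}\phi$ holds. More generally, if two traces $\pi_1,\pi_2$ and points $t_1,t_2$ satisfy $\pi_1(t_1+k)=\pi_2(t_2+k)$ for all $k\in\N$, then $\sat{\pi_1}{t_1}\phi$ iff $\sat{\pi_2}{t_2}\phi$. This is immediate from the semantics, since every clause only inspects positions $t+i$ with $i\geq 0$.

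\textbf{Step 2 (periodicity).} For every $t\geq|\pipre|$ we have $\pi(t+k)=\pi(t+|\pisuf|+k)$ for all $k$, directly from $\pi=\pipre(\pisuf)^\omega$. Step~1 then gives $\sat{\pi}{t}\phi$ iff $\sat{\pi}{t+|\pisuf|}\phi$. Iterating, for $t\geq|\pipre|$ the truth of $\phi$ at $t$ equals its truth at the unique $s\in[\,|\pipre|,\,|\pipre|+|\pisuf|-1\,]$ with $s\equiv t \pmod{|\pisuf|}$. More usefully, it equals its truth at any $s\geq|\pipre|$ congruent to $t$ modulo $|\pisuf|$.

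\textbf{Step 3 (the window contains a period).} I split on the definition of $\rightidx_\pi(a)$.

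If $a<|\pipre|$, then $\rightidx_\pi(a)=|\pipre|+|\pisuf|$. The window $[a,\rightidx_\pi(a)]$ therefore contains the full period $P=[|\pipre|,|\pipre|+|\pisuf|-1]$.

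If $a\geq|\pipre|$, the window is $[a,a+|\pisuf|-1]$. This is itself $|\pisuf|$ consecutive positions, all $\geq|\pipre|$, so it is a complete residue system modulo $|\pisuf|$ inside the periodic part.

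Now take any $t'\geq a$. If $t'\leq\rightidx_\pi(a)$ we are done by hypothesis. Otherwise $t'>\rightidx_\pi(a)\geq|\pipre|$ in both cases, so $t'$ lies in the periodic part. We pick $s$ in the relevant full period inside the window with $s\equiv t'\pmod{|\pisuf|}$; Step~2 then transfers $\sat{\pi}{s}\phi$ to $\sat{\pi}{t'}\phi$.

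The main obstacle is Step~1. Its induction must handle the bounded $\mathcal{U}$ and $\mathcal{R}$ clauses, where the quantifiers range over offsets $i\in I$ that are relative to $t$. The key observation is that all positions referenced are of the form $t+i$, so shifting both the trace and the base point leaves every inner satisfaction query unchanged. Once this is stated carefully, with the induction hypothesis applied at the shifted points $t+i$, the remaining steps are arithmetic on residues.
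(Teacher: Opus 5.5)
Your proof is correct and uses the same periodicity argument as the paper. The paper writes $t'=|\pipre|+n\cdot|\pisuf|+m$ and shifts to $(\pisuf)^\omega$ and back, tacitly relying on the suffix invariance that you prove by induction in Step~1. Your choice of representative is more careful than the paper's: the paper always lands on $|\pipre|+m$, which can fall below $a$ when $a>|\pipre|$ (e.g.\ $|\pipre|=1$, $|\pisuf|=2$, $a=2$, $t'=5$ gives $|\pipre|+m=1\notin[2,3]$), whereas you take the representative inside $[a,a+|\pisuf|-1]$, a complete residue system in the periodic part, so your case split on $a<|\pipre|$ versus $a\geq|\pipre|$ closes that gap.
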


\begin{proof}
    We assume that for all $t\in[a,\rightidx_\pi(a)]$ we have $\sat{\pi}{t}\phi$, and, taking an arbitrary $t'\in\N$ with $t'\geq a$ want to prove that $\sat{\pi}{t'}\phi$. There are two cases. Firstly, if $t'<|\pi|$ then we know that this is within the $[a,\rightidx_\pi(a)]$ range and so we have $\sat{\pi}{t'}\phi$. Otherwise, if $t'\geq|\pi|$, we split $\pi$ into its prefix $\pipre$ and infinitely repeating suffix $\pisuf$, and want to prove that $\sat{\pipre(\pisuf)^\omega}{t'}\phi$. As $t'\geq|\pi|$, we can split it into $t'=|\pipre| + n\cdot|\pisuf| + m$ for some $n,m\in\N$ with $n\geq 1$ and $m<|\pisuf|$.
    \begin{align}
    \begin{split}
    & \sat{\pipre(\pisuf)^\omega}{|\pipre| + n\cdot|\pisuf| + m}\phi\\
    \implies& \sat{(\pisuf)^\omega}{n\cdot|\pisuf| + m}\phi\\
    \implies& \sat{(\pisuf)^\omega}{m}\phi\\
    \implies& \sat{\pipre(\pisuf)^\omega}{|\pipre| + m}\phi
    \end{split}
    \end{align}
    We know that $|\pipre| + m$ is in the $[a,\rightidx_\pi(a)]$ interval, so we have $\sat{\pi}{t'}\phi$.
    \qed
\end{proof}

We also define the \emph{optimality} of weakenings, used to prove that \cegiw{} will not produce an interval weakening that is any weaker than it needs to be.

\begin{algorithm}[t]
\SetAlgoNlRelativeSize{-1}
\caption{Weakening within a context $C$; non-temporal cases elided}
\label{alg:aux}

\Function{\weakenaux($C$, $\psi\bintemporal{\Iorig}\psi'$, $\pi$, $t$)}{
    \uIf{$C = [-]$}{
        \uIf{$\triangle = \mathcal{U}$}{
            \Return \weakenuntildirect$(\psi, \psi', \Iorig, \pi, t)$
        }
        \uElse(\tcp*[h]{$\triangle = \mathcal{R}$}){
            \Return \weakenreleasedirect$(\psi, \psi', \Iorig, \pi, t)$
        }
    }
    $\cdots$\\
    \uElseIf{$C = C \until{J} \phi$}{
        \Return \weakenuntilleft$(C, \phi, J, \psi\bintemporal{\Iorig}\psi', \pi, t)$
    }
    \uElseIf{$C = \phi \until{J} C$}{
        \Return \weakenuntilright$(\phi, C, J, \psi\bintemporal{\Iorig}\psi', \pi, t)$
    }
    \uElseIf{$C = C \release{J} \phi$}{
        \Return \weakenreleaseleft$(C, \phi, J, \psi\bintemporal{\Iorig}\psi', \pi, t)$
    }
    \uElseIf{$C = \phi \release{J} C$}{
        \Return \weakenreleaseright$(\phi, C, J, \psi\bintemporal{\Iorig}\psi', \pi, t)$
    }
}
\end{algorithm}

\begin{definition}[Optimality of right-bound extensions and contractions]
    An interval $I'$ is an optimal right-bound extension (resp.\ contraction) of an interval $I$ with respect to a context $C$, \MTL{} formulae $\psi$ and $\psi'$, a temporal operator $\triangle\in\{\mathcal{U},\mathcal{R}\}$, trace $\pi$, and time-step $t$, if
    \begin{equation}
    \sat{\pi}{t}C[\psi\bintemporal{I'}\psi']
    \end{equation}
    and either (a) $I=I'$, or (b) there exists no strict right-bound contraction (resp.\ extension) $I''$ of $I'$ such that $\sat{\pi}{t}C[\psi\bintemporal{I''}\psi']$.
\end{definition}
The entrypoint of \cegiw{} is \cref{alg:aux}, which recurses following the inductive structure of the \MTL{} context grammar\footnote{Implementations of elided cases are in the public repository.}. The proof of correctness and optimality follows the same inductive structure, with base cases for directly weakening the intervals of $\until{I}$ and $\release{I}$ (\cref{lemma:weakening-algorithm-direct-until,lemma:weakening-algorithm-direct-release}), and inductive cases for weakening subformulae on either side of both operators (\cref{lemma:weakening-algorithm-inductive-until-left,lemma:weakening-algorithm-inductive-until-right,lemma:weakening-algorithm-inductive-release-left,lemma:weakening-algorithm-inductive-release-right}).

\begin{algorithm}[t]
\SetAlgoNlRelativeSize{-1}
\caption{Directly weakening interval of $\mathcal{U}$}
\label{alg:weaken-direct-until}
\Function{\weakenuntildirect($\psi_l$, $\psi_r$, $[a,b]$, $\pi$, $t$)}{
    \For{$i \gets a$ \KwTo $\rightidx_\pi(a)$}{
        \If{$\pi, t+i\vDash \psi_r$}{\label{line:weaken-direct-until-1}
            \Return $[a, \max(b,i)]$
        }
        \If{$\pi, t+i\nvDash \psi_l$}{\label{line:weaken-direct-until-2}
            \textbf{break}
        }
    }
    \Return $None$
}
\end{algorithm}


Intuitively, to weaken a $\mathcal{U}$ formula $\psi_l \until{I} \psi_r$ in \cref{alg:weaken-direct-until}, the algorithm considers how the interval can be adjusted so that the formula becomes satisfied. Starting from time $t$, if the formula does not hold under the original interval, the only admissible weakening is to extend the right bound, thereby allowing additional time for the right subformula $\psi_r$ to become true while the left subformula $\psi_l$ continues to hold. The algorithm therefore extends the right bound incrementally until either the $\mathcal{U}$ formula holds on the given trace or no further extension is possible. In the former case, it returns the smallest such extension, yielding an optimal weakening; in the latter case, it reports that no interval weakening exists.

\begin{lemma}[$\mathcal{U}$ base case]\label{lemma:weakening-algorithm-direct-until}
    Let $I$ be an interval, $\psi_l$ and $\psi_r$ \MTL{} formulae, and $\pi$ a lasso trace. Then, for all timepoints $t\in\N$ with
    \begin{equation}
    I'=\weakenuntildirect(\psi_l,\psi_r,I,\pi,t),
    \end{equation}
    \noindent either $I'$ is an optimal right-bound extension of $I$ such that $\sat{\pi}{t}\psi_l\until{I'}\psi_r$, or $I'=None$, in which case there exists no such interval.
\end{lemma}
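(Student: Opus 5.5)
The plan is to follow the loop of \cref{alg:weaken-direct-until} and split on how it terminates: a \textbf{return} at line~\ref{line:weaken-direct-until-1} with some index $i$, or falling through to $None$ (either by the \textbf{break} at line~\ref{line:weaken-direct-until-2} or by exhausting $[a,\rightidx_\pi(a)]$). Throughout I use the loop invariant that, on reaching iteration $i$, for every $k\in[a,i)$ we have $\nsat{\pi}{t+k}\psi_r$ and $\sat{\pi}{t+k}\psi_l$. This holds because earlier iterations neither returned nor broke.

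\textbf{Return case.} Suppose the algorithm returns $I'=[a,\max(b,i)]$ at iteration $i$. Then $I'$ is a right-bound extension of $I$ with $i\in I'$ and $\sat{\pi}{t+i}\psi_r$. By the invariant, $\psi_l$ holds at $t+k$ for all $k\in[a,i)$, which is exactly $[0,i)\cap I'$. So the semantics give $\sat{\pi}{t}\psi_l\until{I'}\psi_r$.

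For optimality there are two sub-cases. If $i\le b$ then $I'=I$, and clause (a) of the definition applies. If $i>b$, take any strict contraction $I''=[a,c]$ of $I'$ with $c<i$. A witness $i''\in[a,c]$ for $\sat{\pi}{t}\psi_l\until{I''}\psi_r$ would need $\sat{\pi}{t+i''}\psi_r$ with $a\le i''<i$. The invariant rules this out, so clause (b) holds.

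\textbf{None case.} Suppose the algorithm returns $None$, and assume for contradiction that some right-bound extension $[a,c]$ of $I$ satisfies the formula at $t$. Take the minimal witness $w\ge a$ with $\sat{\pi}{t+w}\psi_r$ and $\psi_l$ holding on $[a,w)$. I then compare $w$ with the point where the loop stopped.

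\begin{enumerate}
\item If the loop broke at some $k$, then $\nsat{\pi}{t+k}\psi_l$ and $\psi_r$ fails on $[a,k]$. Hence $w>k$. But then $\psi_l$ must hold at $t+k$, which is a contradiction.
\item If the loop ran through all of $[a,\rightidx_\pi(a)]$, then $\psi_r$ fails at $t+k$ for every $k$ in that range, while $\psi_l$ holds there. I then want to conclude that $\psi_r$ fails at $t+k$ for all $k\ge a$, contradicting the existence of $w$. This uses the periodicity argument of \cref{lemma:lasso-trace-coverage} applied to $\neg\psi_r$.
\end{enumerate}

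\textbf{Main obstacle.} The hard part is this last step. \cref{lemma:lasso-trace-coverage} is stated for absolute positions $[a,\rightidx_\pi(a)]$, whereas the algorithm evaluates at offsets $t+i$. So I must show that the window $[t+a,\,t+\rightidx_\pi(a)]$ covers a full period of the suffix. That requires either arguing that $\rightidx_\pi(a)-a+1\ge|\pisuf|$ and that the window reaches past $|\pipre|$ relative to $t$, or re-deriving the lemma for the shifted start $t+a$.

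I would first try the second route. The argument would be that the truth of $\psi_r$ at positions $\ge|\pipre|$ is $|\pisuf|$-periodic: the suffix trace $\pi^{(s)}$ depends only on $s$ modulo $|\pisuf|$ once $s\ge|\pipre|$, by induction on formulae. Given that, any $|\pisuf|$ consecutive positions past the prefix cover all residues. When $t+a<|\pipre|$, the window length $|\pi|-a+1$ must also reach the periodic region, and I will have to handle carefully the possibility that $t$ is large relative to $a$. If this fails for general $t$, I would restrict the claim to the reachable calls, or note that the window should be $\rightidx_\pi(t+a)-t$.
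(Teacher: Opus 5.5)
Your proposal is correct and follows the paper's proof. It uses the same loop invariant and the same split into the return, \textbf{break}, and exhausted-loop cases, with \cref{lemma:lasso-trace-coverage} closing the last one; your optimality argument via clauses (a) and (b) is more explicit than the paper's.

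The offset issue you flag, which the paper passes over silently, is harmless and needs no fallback. Checking the three cases of $\rightidx_\pi$ gives $\rightidx_\pi(t+a)\le t+\rightidx_\pi(a)$ for every $t$. So the scanned window $[t+a,t+\rightidx_\pi(a)]$ contains $[t+a,\rightidx_\pi(t+a)]$, and \cref{lemma:lasso-trace-coverage} applied at start point $t+a$ to $\neg\psi_r$ gives the conclusion with no restriction on $t$.
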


\begin{proof}
    We take an arbitrary $t$. Our proof for \cref{alg:weaken-direct-until} uses the loop invariant that $\forall j\in[a,\rightidx_\pi(a)]$ with $j<i$ (where $I=[a,b]$), we have that $\nsat{\pi}{t+j}\psi_r$ and $\sat{\pi}{t+j}\psi_l$. On first entry to the loop there is no such $j$, so this is trivially true. On reaching the end of the loop body, we know that $\nsat{\pi}{t+i}\psi_r$ and $\sat{\pi}{t+i}\psi_l$, and so in combination with the loop invariant we know that $\forall j\in I$ where $j\leq i$, we have $\nsat{\pi}{t+j}\psi_r$ and $\sat{\pi}{t+j}\psi_l$. Thus, the loop invariant is preserved. Suppose at the start of iteration $i$ that the loop invariant holds. If $\sat{\pi}{t+j}\psi_r$ on \cref{line:weaken-direct-until-1} then we return $I'=[a,\max(b,i)]$. This is an optimal right-bound extension of $I$ and we have that $\sat{\pi}{t}\psi_l\until{I'}\psi_r$.
    
    If $None$ is returned, then either we broke out of the loop early because for some $i\in[a,\rightidx_\pi(a)]$ we have $\nsat{\pi}{t+i}\psi_l$ at \cref{line:weaken-direct-until-2}, or we ran the loop to completion. In the first case, we know that $\nsat{\pi}{t+i}\psi_r$ as this is checked before at \cref{line:weaken-direct-until-1}, and so combining with the loop invariant we know that $\psi_r$ never held up until $\psi_l$ stopped holding, and so there is no right-bound extension $I'$ for which $\sat{\pi}{t}\psi_l\until{I'}\psi_r$. In the second case, by the loop invariant we have that for all $i\in[a,\rightidx_\pi(a)]$ we have $\sat{\pi}{t+i}\psi_l$ and $\nsat{\pi}{t+i}\psi_r$. By \cref{lemma:lasso-trace-coverage} we then have the same for all $i\in\N$ with $i\geq a$, and so there exists no right-bound extension $I'$ of $I$ that satisfies $\sat{\pi}{t}\psi_l\until{I'}\psi_r$.
    \qed
\end{proof}

\begin{lemma}[$\mathcal{R}$ base case]\label{lemma:weakening-algorithm-direct-release}
    Let $I$ be an interval, $\psi_l$ and $\psi_r$ \MTL{} formulae, and $\pi$ a lasso trace. Then, for all timepoints $t\in\N$ with
    \begin{equation}
    I'=\weakenreleasedirect(\psi_l,\psi_r,I,\pi,t),
    \end{equation}
    \noindent either $I'$ is an optimal right-bound contraction of $I$ such that $\sat{\pi}{t}\psi_l\release{I'}\psi_r$, or $I'=None$, in which case there exists no such interval.
\end{lemma}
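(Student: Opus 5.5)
The plan mirrors the proof of \cref{lemma:weakening-algorithm-direct-until}, with a loop invariant over the scan. I assume \weakenreleasedirect{} is the dual of \cref{alg:weaken-direct-until}. It iterates $i$ from $a$ up to the right end of $\covering_\pi([a,b])$. At each $i$ it does three things in order:
\begin{itemize}
\item if $\nsat{\pi}{t+i}\psi_r$, it returns $None$ when $i=a$ and $[a,i-1]$ otherwise;
\item if $\sat{\pi}{t+i}\psi_l$, it returns the original $[a,b]$, because the release is discharged;
\item otherwise it continues.
\end{itemize}
If the loop runs to completion it returns $[a,b]$. I use the reading of $\mathcal{R}$ from the proof of \cref{lemma:release-i}: $\psi_r$ must hold on the whole interval, unless $\psi_l$ occurs at some point of it with $\psi_r$ holding up to and including that point.

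Fix an arbitrary $t$. The loop invariant is that for all $j\in[a,\rightidx_\pi(a)]$ with $j<i$ we have $\sat{\pi}{t+j}\psi_r$ and $\nsat{\pi}{t+j}\psi_l$. It holds trivially on entry. It is preserved because reaching the end of the body means both checks failed at $i$. The proof then splits by exit point.

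\begin{enumerate}
\item \textbf{Exit because $\psi_l$ holds at $t+i$.} The invariant together with $\sat{\pi}{t+i}\psi_r$ gives the second disjunct of the $\mathcal{R}$ semantics for $I$ itself. So $I'=I$ is optimal by clause (a).
\item \textbf{Exit because $\psi_r$ fails at $t+i$ with $i>a$.} The invariant gives $\psi_r$ on all of $[a,i-1]$, so the first disjunct holds for $I'=[a,i-1]$. For optimality, any strict right-bound extension $I''$ of $I'$ contains $i$. On $I''$ the first disjunct fails because $\psi_r$ fails at $t+i$. The second disjunct fails because, by the invariant, $\psi_l$ does not hold at any point before $i$, and no witness at or beyond $i$ can have $\psi_r$ holding at $i$.
\item \textbf{Exit because $\psi_r$ fails at $t+a$.} Every right-bound contraction $[a,b']$ contains $a$. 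There $\psi_r$ fails and $\psi_l$ cannot help, since the witness needs $\psi_r$ at $a$ as well. So no contraction satisfies the formula and $None$ is correct.
\item \textbf{Loop completes.} If $b\le\rightidx_\pi(a)$, the whole of $I$ was scanned with $\psi_r$ always true, so $I'=I$ satisfies the first disjunct. If $b>\rightidx_\pi(a)$, including $b=\infty$, I apply \cref{lemma:lasso-trace-coverage} to $\psi_r$ on $[a,\rightidx_\pi(a)]$. This gives $\psi_r$ at every $t'\ge a$, and hence on all of $I$.
\end{enumerate}

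I expect the main obstacle to be the last case. \cref{lemma:lasso-trace-coverage} is stated for absolute trace positions, but the scan is over offsets $t+j$. When $t>0$ I would therefore apply the lemma at the shifted start point $t+a$ rather than $a$, and check that the scanned window covers one full period of the suffix from there. If the window defined via $\rightidx_\pi(a)$ does not line up with the trace positions, I would argue periodicity of $\pisuf$ directly, exactly as in the chain of implications in that lemma's proof. A smaller issue is that the definition of optimality is phrased via ``strict extension'' for contractions. I would state explicitly that in case 2 every strict extension $[a,b'']$ with $b''\ge i$ fails, which covers all of them.
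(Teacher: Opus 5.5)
Your proposal is correct and follows essentially the argument the paper intends: the same loop-invariant and exit-point case analysis as its $\mathcal{U}$ base case, dualised to a scan over $\covering_\pi(I)$ that checks $\psi_r$ before $\psi_l$. The paper defers the $\mathcal{R}$ pseudocode and proof to its repository, so your reconstruction of \weakenreleasedirect{} is an assumption, though the natural one given the inclusive release semantics used in \cref{lemma:release-i}. Your offset worry in the final case resolves cleanly. The inequality $\rightidx_\pi(t+a)\le t+\rightidx_\pi(a)$ holds in all three cases ($a\ge|\pipre|$; $t+a<|\pipre|$; $a<|\pipre|\le t+a$), so \cref{lemma:lasso-trace-coverage} applied at start point $t+a$ only uses positions your loop has already scanned.
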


\begin{proof}
Full proof and pseudocode is available in our public repository.
\qed
\end{proof}

We prove the inductive cases with an \MTL{} context $C$, an interval $I$, \MTL{} formulae $\psi$ and $\psi'$, a temporal operator $\triangle\in\{\mathcal{U}, \mathcal{R}\}$, and a lasso trace $\pi$. We use the induction hypothesis $P(C)$, that for all timepoints $t\in\N$ with $I'=\weakenaux(C,\psi\bintemporal{I}\psi', \pi,t)$, if $I'$ is an interval then $\pi,t\vDash C[\psi\bintemporal{I}\psi']$, and
\begin{compactitem}
    \item[1.] If $\triangle=\mathcal{U}$, then $I'$ is an optimal right-bound extension of $I$;

    \item[2.] If $\triangle=\mathcal{R}$, then $I'$ is an optimal right-bound contraction of $I$.
\end{compactitem}
If $I'=None$, then there exists no such interval in each case.

Intuitively, when weakening within the left subformula of a $\mathcal{U}$ operator in \cref{alg:aux-until-left}, we must ensure that the left subformula holds at every relevant timestep until the right subformula becomes true. Starting from time $t$, the algorithm therefore examines each timestep $t+i$ within the original interval and determines the interval weakening required for the left subformula to hold on the given trace at that point. Because the left operand of $\mathcal{U}$ is interpreted universally over the interval, the overall weakening must be strong enough to satisfy all such requirements. The algorithm therefore selects the weakest interval that subsumes all interval weakenings computed for individual timesteps. If no such interval exists, or if weakening fails at any timestep, the algorithm reports that no valid weakening can be found.

\begin{algorithm}[t]
\SetAlgoNlRelativeSize{-1}
\caption{Weakening within $\mathcal{U}$ on the left}
\label{alg:aux-until-left}

\Function{\weakenuntilleft($C$, $\phi$, $[a,b]$, $\psi\bintemporal{\Iorig}\psi'$, $\pi$, $t$)}{
    $\bfin \gets \min(b, \rightidx_\pi(a))$ \\
    $intervals \gets [\;]$ \\
    \For{$i \gets a$ \KwTo $\bfin$}{
        \If{$\pi, t + i \vDash \phi$}{\label{line:weaken-until-left-1}
            \If{$i = a$}{
                \Return $\Iorig$
            }
            \Return interval in $intervals$ with maximal absolute difference to $\Iorig$
        }
        $I \gets \weakenaux(C, \psi\bintemporal{\Iorig}\psi', \pi, t + i)$\\
        \If{$I = None$}{\label{line:weaken-until-left-2}
            \Return $None$
        }
        append $I$ to $intervals$
    }
    \Return $None$
}
\end{algorithm}

\begin{lemma}[$\mathcal{U}$-left inductive case]\label{lemma:weakening-algorithm-inductive-until-left}
    Let $C$ be an \MTL{} context, $I$ and $J$ intervals, $\phi$, $\psi$, and $\psi'$ \MTL{} formulae, $\triangle\in\{\mathcal{U}, \mathcal{R}\}$ a temporal operator, and $\pi$ a lasso trace. If $P(C)$ holds, then so does $P(C\until{J}\phi)$.
\end{lemma}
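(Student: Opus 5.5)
The plan is a loop-invariant argument for \cref{alg:aux-until-left}, combined with the induction hypothesis $P(C)$ at each shifted timepoint $t+i$ and the total orders of \cref{lemma:extension-weakening-order,lemma:contraction-weakening-order}. I treat $\triangle=\mathcal{U}$ in detail. The case $\triangle=\mathcal{R}$ is symmetric, with "maximal absolute difference to $\Iorig$" meaning smallest contraction instead of largest extension.

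\textbf{Loop invariant.} Fix $t$ and write $J=[a,b]$. The invariant is that, at the start of iteration $i$, the following two facts hold for every $j\in[a,i)$:
\begin{compactitem}
\item[(i)] $\nsat{\pi}{t+j}\phi$;
\item[(ii)] $intervals$ contains an interval $I_j=\weakenaux(C,\psi\bintemporal{\Iorig}\psi',\pi,t+j)$, which by $P(C)$ is the optimal right-bound extension of $\Iorig$ making $C[\psi\bintemporal{I_j}\psi']$ true at $t+j$.
\end{compactitem}
Preservation is immediate from the loop body: if we pass \cref{line:weaken-until-left-1} without returning, (i) holds at $i$, and if we pass \cref{line:weaken-until-left-2}, the interval appended at $i$ gives (ii).

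\textbf{Returning an interval.} Suppose the algorithm returns at iteration $i$, so $\sat{\pi}{t+i}\phi$.
\begin{compactitem}
\item[1.] If $i=a$, the range $[0,a)\cap J$ is empty. The original interval then already satisfies the formula, and case (a) of optimality applies.
\item[2.] Otherwise, let $I^*$ be the weakest interval in $intervals$. It exists because $\rightmodifications(\Iorig)$ is totally ordered by $\supseteq$. By \cref{lemma:extension-weakening-order}, $C[\psi\bintemporal{I_j}\psi']\wkn C[\psi\bintemporal{I^*}\psi']$ for every $j<i$, so $C[\psi\bintemporal{I^*}\psi']$ holds at every $t+j$ with $j\in[a,i)$. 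These are exactly the points in $[0,i)\cap J$. With witness $i$, this gives $\sat{\pi}{t}C[\psi\bintemporal{I^*}\psi']\until{J}\phi$.
\end{compactitem}

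\textbf{Optimality.} Let $I^*=I_k$, and suppose some strict contraction $I''$ of $I^*$ also satisfies the $\mathcal{U}$-formula at $t$. Take its witness $i''$.
\begin{compactitem}
\item[1.] If $i''\ge i$, then $k<i\le i''$, so $C[\psi\bintemporal{I''}\psi']$ must hold at $t+k$. This contradicts the optimality of $I_k$ from $P(C)$.
\item[2.] If $i''<i$, then $\sat{\pi}{t+i''}\phi$ with $i''\in[a,i)$, contradicting invariant (i).
\end{compactitem}

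\textbf{Returning $None$.} There are two ways this happens.
\begin{compactitem}
\item[1.] At \cref{line:weaken-until-left-2}, some $t+i$ admits no weakening, and $\phi$ fails at every earlier point. Any weakening would need a witness $\ge i$, which forces $C[\cdot]$ to hold at $t+i$; this is impossible by $P(C)$.
\item[2.] The loop finishes without returning. Then $\phi$ fails on all of $[a,\bfin]$. If $\bfin=b$, there is no witness in $J$. If $\bfin=\rightidx_\pi(a)<b$, apply \cref{lemma:lasso-trace-coverage} to $\neg\phi$ on the suffix of $\pi$ starting at $t$. This shows $\phi$ fails at every $t+j$ with $j\ge a$, so again no witness exists.
\end{compactitem}

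\textbf{Main obstacle.} I expect the coverage step to be the hardest. \cref{lemma:lasso-trace-coverage} is stated for a trace starting at $0$, but the shifted suffix $\pi$ from $t$ is again lasso with possibly different prefix length, and $\rightidx$ is computed relative to $\pi$, not to that suffix. I would first try to show that the shifted suffix is lasso with prefix length at most $|\pipre|$ and suffix length $|\pisuf|$. Its $\rightidx$ is therefore bounded by $\rightidx_\pi(a)$, which is enough. If that fails, I would restate the lemma in a time-shifted form.
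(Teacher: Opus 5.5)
Your proof is correct and follows the paper's route. Both use the same loop invariant, apply $P(C)$ at each $t+j$, and use the total-order lemmas to select the weakest collected interval. Your version is in fact more complete, since the paper's proof does not spell out optimality or the $None$ branch where the loop runs to completion.

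Your planned fix for the time shift also works. Since $t+\rightidx_\pi(a)\ge\rightidx_\pi(t+a)$ in every case, the coverage lemma applies directly at position $t+a$.

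One small repair is needed in the optimality step. If $I^*=\Iorig$, then $P(C)$ only gives clause (a) at $t+k$, so your contradiction does not arise. In that case, conclude instead that the returned interval is optimal by clause (a) directly.
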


\begin{proof}
    For \cref{alg:aux-until-left} we assume the inductive hypothesis $P(C)$ and want to prove $P(C\until{J}\phi)$. We take an arbitrary $t$ and distinguish two cases, according to whether $\triangle$ is $\mathcal{U}$ or $\mathcal{R}$. In either case, by the induction hypothesis each recursive call evaluates to either $None$ or an optimal interval $I'$ related to $I$ by the corresponding relation (right-bound extension or contraction respectively) such that $\sat{\pi}{t+t'} C[\psi\bintemporal{I'}\psi']$.
    
    We use the loop invariant that, for all $j\in\covering_\pi(J)$ with $j<i$, we have that $\nsat{\pi}{t+j}\phi$ and that $I'=\weakenaux(C,\psi\bintemporal{I}\psi',\pi,t+j)$ is an interval such that $\sat{\pi}{t+j} C[\psi\bintemporal{I'}\psi']$. On first entry to the loop there is no such $j$, so this is trivially true. On reaching the end of the loop body, we know that $\weakenaux(C,\psi\bintemporal{I}\psi',\pi,t+i)\neq None$ from \cref{line:weaken-until-left-2}, and so by the induction hypothesis the recursive call must have produced a suitable interval $I'$. As we also know that $\nsat{\pi}{t+i}\phi$ from \cref{line:weaken-until-left-1}, the loop invariant is thus preserved for $j\leq i$. Suppose at the start of iteration $i$ that the loop invariant holds. If $\weakenaux(C,\psi\bintemporal{I}\psi',\pi,t+i) = None$ at \cref{line:weaken-until-left-2} then by the induction hypothesis we know that there is no suitable interval $I''$ for which $\sat{\pi}{t+i} C[\psi\bintemporal{I''}\psi']$, and by the loop invariant that there is no $j<i$ for which $\sat{\pi}{t+j}\phi$. Thus, there is no suitable interval $I''$ for which $\sat{\pi}{t} (C\until{J}\phi)[\psi\bintemporal{I''}\psi']$. If $\sat{\pi}{t+j}\phi$ at \cref{line:weaken-until-left-1} then we split on whether it is our first iteration or not. If $i=a$ (where $J=[a,b]$) then we know that $\sat{\pi}{t+a}\phi$, and so any interval will work. We simply return the original interval $\Iorig$.

    Otherwise, by the loop invariant we know that for all $j\in\covering_\pi(J)$ with $j<i$ --- of which there must be at least one as $i>a$ --- we have an interval $I'$ such that $\sat{\pi}{t+j} C[\psi\bintemporal{I'}\psi']$. Applying \cref{lemma:extension-weakening-order} if $\triangle=\mathcal{U}$, or \cref{lemma:contraction-weakening-order} if $\triangle=\mathcal{R}$, we obtain a maximum interval $I''$ such that for all $j\in\covering_\pi(J)$ with $j<i$ we have $\sat{\pi}{t+j} C[\psi\bintemporal{I''}\psi']$. If $\covering_\pi(J)=J$ then we have
    \begin{equation}
    \sat{\pi}{t} (C\until{J}\phi)[\psi\bintemporal{I''}\psi'].
    \end{equation}
    Otherwise, if $\covering_\pi(J)=[a,\rightidx_\pi(a)]$, then by \cref{lemma:lasso-trace-coverage} for all $k\in\N$ with $k\geq a$ we have $\sat{\pi}{t+k} C[\psi\bintemporal{I''}\psi']$, and so the above holds here too.
    \qed
\end{proof}

\begin{lemma}[$\mathcal{U}$-right inductive case]\label{lemma:weakening-algorithm-inductive-until-right}
    Let $C$ be an \MTL{} context, $I$ and $J$ intervals, $\phi$, $\psi$, and $\psi'$ \MTL{} formulae, $\triangle\in\{\mathcal{U}, \mathcal{R}\}$ a temporal operator, and $\pi$ a lasso trace. If $P(C)$ holds, then so does $P(\phi\until{J}C)$.
\end{lemma}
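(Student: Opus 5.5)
The proof follows the same pattern as \cref{lemma:weakening-algorithm-inductive-until-left}, with the roles of the two operands swapped. The algorithm for \weakenuntilright{} is not shown, so I assume it does the following. It iterates $i$ from $a$ to $\bfin=\min(b,\rightidx_\pi(a))$, where $J=[a,b]$. At each step it computes $I_i=\weakenaux(C,\psi\bintemporal{I}\psi',\pi,t+i)$. If $I_i\neq None$, it returns $I_i$. Otherwise, if $\nsat{\pi}{t+i}\phi$, it breaks and returns $None$. If the loop completes, it also returns $None$.

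The justification for returning at the first success is that, by \cref{lemma:extension-weakening-order} (resp.\ \cref{lemma:contraction-weakening-order}), $C$ is the existential operand of $\mathcal{U}$. The strongest admissible interval is therefore the \emph{least} weakening over all witnesses $i$ for which $\phi$ holds on $[a,i)$. I will argue that the earliest such witness need not give the least weakening. So the algorithm must in fact keep scanning while $\phi$ holds and return the interval in the collected list with \emph{minimal} difference to $\Iorig$. I will prove the lemma for that version.

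The proof proceeds as follows. Fix an arbitrary $t$ and split on $\triangle\in\{\mathcal{U},\mathcal{R}\}$ exactly as in the $\mathcal{U}$-left case. The loop invariant is: for all $j\in\covering_\pi(J)$ with $j<i$, we have $\sat{\pi}{t+j}\phi$, and $intervals$ contains every non-$None$ value $\weakenaux(C,\psi\bintemporal{I}\psi',\pi,t+j)$ for such $j$.

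\textbf{Correctness.} Any returned $I''$ equals some $I_j$ with $\phi$ holding on $[a,j)$. By $P(C)$ we have $\sat{\pi}{t+j}C[\psi\bintemporal{I''}\psi']$, and hence by the $\mathcal{U}$ semantics $\sat{\pi}{t}(\phi\until{J}C)[\psi\bintemporal{I''}\psi']$.

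\textbf{Optimality.} Suppose a strictly stronger $I'''$ also makes the formula hold. Then there is a witness $k\in J$ with $\sat{\pi}{t+k}C[\psi\bintemporal{I'''}\psi']$ and $\phi$ holding on $[a,k)$.

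If $k\le\bfin$, then $k$ was visited by the loop. By optimality of $I_k$ from $P(C)$, together with the total order, $I_k$ is at least as strong as $I'''$. This contradicts minimality of the chosen interval.

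If $k>\bfin$, then $J$ was truncated, so $\bfin=\rightidx_\pi(a)$ and $\phi$ held on all of $[a,\rightidx_\pi(a)]$. The main obstacle is to show that a witness beyond the covering interval is no better than one inside it. For this I would use the periodicity argument from the proof of \cref{lemma:lasso-trace-coverage}: reduce $k$ to a point $k'\in[a,\rightidx_\pi(a)]$ satisfying $\pi^{k'}=\pi^{k}$ as suffixes. Then $\sat{\pi}{t+k'}C[\psi\bintemporal{I'''}\psi']$, and the previous case applies.

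\textbf{$None$ case.} If $None$ is returned, then either $\phi$ failed at some $i$ with all $I_j$, $j\le i$, being $None$, or every $I_j$ was $None$ over the whole cover. In both cases $P(C)$ rules out any interval at every possible witness. The periodicity argument again handles witnesses outside the cover, so no suitable interval exists.
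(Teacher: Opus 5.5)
Your proof is correct and follows essentially the same route as the template the paper uses for these inductive cases (the $\mathcal{U}$-left proof). Concretely: a loop invariant over $\covering_\pi(J)$ that collects the recursive results while $\phi$ keeps holding; selection of the interval with \emph{minimal} absolute difference to $\Iorig$, justified by \cref{lemma:extension-weakening-order} or \cref{lemma:contraction-weakening-order} (you are right that returning at the first successful witness would not be optimal); and lasso periodicity to rule out better witnesses beyond the truncated window. Two details to tidy: if some $I_k$ equals $\Iorig$ via clause (a) of optimality, your step asserting that $I_k$ is at least as strong as $I'''$ can fail, but then the returned interval is $\Iorig$ and is optimal directly by clause (a); and the periodic reduction should be stated on the absolute positions $t+k$ and $t+k'$, noting that in both cases of $\rightidx_\pi(a)$ the window $[t+a,t+\rightidx_\pi(a)]$ contains a full loop period lying at or after $|\pipre|$.
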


\begin{lemma}[$\mathcal{R}$-left inductive case]\label{lemma:weakening-algorithm-inductive-release-left}
    Let $C$ be an \MTL{} context, $I$ and $J$ intervals, $\phi$, $\psi$, and $\psi'$ \MTL{} formulae, $\triangle\in\{\mathcal{U}, \mathcal{R}\}$ a temporal operator, and $\pi$ a lasso trace. If $P(C)$ holds, then so does $P(C\release{J}\phi)$.
\end{lemma}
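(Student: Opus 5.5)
We prove $P(C\release{J}\phi)$ from $P(C)$ by mirroring the proof of \cref{lemma:weakening-algorithm-inductive-until-left}, but with the roles of the two disjuncts in the semantics of $\mathcal{R}$ made explicit. Write $J=[a,b]$ and fix an arbitrary $t$. The release-left algorithm (from the public repository) scans $i$ from $a$ to $\bfin=\min(b,\rightidx_\pi(a))$. At each step it checks whether $\phi$ holds at $t+i$, and it calls $\weakenaux(C,\psi\bintemporal{I}\psi',\pi,t+i)$ to obtain the interval needed for $C[\cdot]$ to hold at $t+i$.

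There are two ways to satisfy $(C\release{J}\phi)[\psi\bintemporal{I''}\psi']$, read through the semantics as stated. One is for $\phi$ to hold throughout $J$. The other is for some $j\in J$ to have $C[\psi\bintemporal{I''}\psi']$ at $t+j$ with $\phi$ holding on $[0,j]\cap J$. The plan is therefore the following loop invariant: for all $j\in\covering_\pi(J)$ with $j<i$, we have $\sat{\pi}{t+j}\phi$, and we have recorded the result $I_j=\weakenaux(C,\psi\bintemporal{I}\psi',\pi,t+j)$ (possibly $None$). The invariant is trivially true initially and preserved exactly as in the $\mathcal{U}$-left case.

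The argument then splits on how the loop exits.

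\begin{enumerate}
\item \textbf{$\phi$ fails at $t+i$.} No witness $j>i$ is possible, so the admissible witnesses are exactly $j\le i$. By $P(C)$, each non-$None$ $I_j$ is optimal for position $t+j$. By \cref{theorem:context-wkn} together with \cref{lemma:extension-weakening-order} (if $\triangle=\mathcal{U}$) or \cref{lemma:contraction-weakening-order} (if $\triangle=\mathcal{R}$), the strongest of these candidates is the optimal choice, because the existential in $\mathcal{R}$ needs only one witness. So the algorithm returns the interval with \emph{minimal} difference from $\Iorig$. If the first disjunct is impossible and every $I_j$ is $None$, then $P(C)$ gives that no weakening exists, and $None$ is correct.

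\item \textbf{The loop runs to $\bfin$ with $\phi$ holding throughout.} If $\covering_\pi(J)=J$, then $\phi$ holds on all of $J$. If instead $\covering_\pi(J)=[a,\rightidx_\pi(a)]$, then $\phi$ holds on all of $J$ by \cref{lemma:lasso-trace-coverage}. Either way, the first disjunct is satisfied for any interval, so $\Iorig$ itself works. Optimality then holds by clause (a) of the definition.
\end{enumerate}

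The main obstacle is optimality in case 1. We must show that any $I''$ strictly stronger than the returned interval fails at \emph{every} witness $j\le i$, not just at one of them. This is where I would use the total orders of \cref{lemma:extension-weakening-order} and \cref{lemma:contraction-weakening-order}. Suppose a strictly stronger $I''$ worked at some witness $j$. Then $I''$ would be a right-bound modification of $I$ satisfying $C$ at $t+j$ that is strictly stronger than $I_j$, contradicting the optimality of $I_j$ given by $P(C)$.

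A secondary subtlety is that a witness with $C$ may also arise at $i$ itself, the position where $\phi$ fails. The semantics requires $\phi$ on $[0,j]\cap J$ including $j$, so such an $i$ is not a valid witness. Hence the recursive call at $t+i$ must be excluded, or computed only after $\phi$ is confirmed at $t+i$. I would check that the pseudocode orders these steps accordingly.
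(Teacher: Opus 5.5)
Your proposal is correct and mirrors the paper's proof of \cref{lemma:weakening-algorithm-inductive-until-left}, which is the closest comparison available since the $\mathcal{R}$-left proof itself is deferred to the repository: a loop invariant over $\covering_\pi(J)$, $P(C)$ for each recursive call, the total orders of \cref{lemma:extension-weakening-order,lemma:contraction-weakening-order} to choose among candidates, and \cref{lemma:lasso-trace-coverage} when the scan completes, with the correct dual twist that one releasing witness suffices, so $None$ results are skipped and the strongest candidate is returned (your reading of $\mathcal{R}$, with $C$ as releaser and $\phi$ held, is the one used in the paper's proofs and duality remark, even though the displayed clause swaps the operands). Resolve your $j\le i$ versus $j<i$ inconsistency in favour of $j<i$, and in the optimality step note two cases: if some $I_j=\Iorig$, the returned interval is $\Iorig$ and clause (a) applies; and a witness $j$ with $I_j=None$ cannot support a stronger $I''$, by $P(C)$ together with the same order lemmas.
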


\begin{lemma}[$\mathcal{R}$-right inductive case]\label{lemma:weakening-algorithm-inductive-release-right}
    Let $C$ be an \MTL{} context, $I$ and $J$ intervals, $\phi$, $\psi$, and $\psi'$ \MTL{} formulae, $\triangle\in\{\mathcal{U}, \mathcal{R}\}$ a temporal operator, and $\pi$ a lasso trace. If $P(C)$ holds, then so does $P(\phi\release{J}C)$.
\end{lemma}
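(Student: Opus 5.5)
We follow the template of \cref{lemma:weakening-algorithm-inductive-until-left}, but now for the algorithm \weakenreleaseright{}, which weakens inside the right operand $C$ of $\phi\release{J}C$. Assume $P(C)$, fix an arbitrary $t$, and write $J=[a,b]$. Let $\bfin=\min(b,\rightidx_\pi(a))$, so that $[a,\bfin]=\covering_\pi(J)$. By the semantics, $\sat{\pi}{t}\phi\release{J}C[\psi\bintemporal{I'}\psi']$ holds iff one of two things happens. Either $C[\psi\bintemporal{I'}\psi']$ holds at every $t+i$ with $i\in J$. Or there is a first $i\in J$ with $\sat{\pi}{t+i}\phi$, and $C[\psi\bintemporal{I'}\psi']$ holds at all $t+j$ with $j\in[a,i]$. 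Since $C$ occurs universally, the algorithm scans $i$ from $a$ to $\bfin$. At each step it calls $\weakenaux(C,\psi\bintemporal{I}\psi',\pi,t+i)$ and returns $None$ if that call fails. Otherwise it records the resulting interval, and if $\sat{\pi}{t+i}\phi$ it stops and returns the weakest recorded interval. If the loop runs to completion, it returns the weakest recorded interval. Note one difference from the $\mathcal{U}$-left case: the current step $i$ must be included before stopping, because $\mathcal{R}$ quantifies over $[0,i]$ rather than $[0,i)$.

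The loop invariant will be: for every $j\in[a,\bfin]$ with $j<i$, we have $\nsat{\pi}{t+j}\phi$, and the recursive call at $t+j$ returned an optimal interval $I_j$ with $\sat{\pi}{t+j}C[\psi\bintemporal{I_j}\psi']$. Preservation follows directly from $P(C)$ and the guards in the loop body.

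\textbf{Returning $None$ at step $i$.} By $P(C)$, no interval $I''$ of the required kind makes $C[\psi\bintemporal{I''}\psi']$ hold at $t+i$. By the invariant, $\phi$ fails at every earlier point of $J$, so both disjuncts of the $\mathcal{R}$ semantics need $C$ to hold at $t+i$. Hence no weakening exists.

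\textbf{Stopping at step $i$ because $\phi$ holds.} Choose $I''$ as the weakest of $I_a,\dots,I_i$ in the order of \cref{lemma:extension-weakening-order} (if $\triangle=\mathcal{U}$) or \cref{lemma:contraction-weakening-order} (if $\triangle=\mathcal{R}$). Those lemmas give $\sat{\pi}{t+j}C[\psi\bintemporal{I''}\psi']$ for all $j\le i$, which establishes the second disjunct.

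\textbf{Loop runs to completion.} The same choice of $I''$ gives $C$ on all of $\covering_\pi(J)$. If $\covering_\pi(J)=J$, we are done directly. Otherwise, \cref{lemma:lasso-trace-coverage} extends this to every $k\ge a$, so the first disjunct holds.

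The main obstacle is optimality. Every interval $\hat I$ that makes the whole formula hold must in particular make $C$ hold at each $t+j$ that the algorithm examined. This is because $\phi$ failed at all earlier points, so the release obligation cannot be discharged before $j$. By the optimality of each $I_j$ given by $P(C)$, together with totality of the order, $\hat I$ is at least as weak as each $I_j$, and hence at least as weak as their maximum $I''$. So no strict strengthening of $I''$ works.

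One subtle point is that optimality of $I_j$ as stated only excludes \emph{strict} strengthenings of $I_j$. We need it to exclude every interval strictly stronger than $I_j$ within $\rightmodifications(I)$. We will argue this through monotonicity: by \cref{theorem:context-wkn} and \cref{lemma:until-i,lemma:release-i}, satisfaction is upward closed along the total order, so any stronger satisfying interval would contradict clause (b). Case (a), $I_j=I$, is handled the same way, since every admissible interval is at least as weak as $I$.
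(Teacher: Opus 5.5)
Your proof is correct and, relative to your (reasonable) reconstruction of \weakenreleaseright{}, follows the same loop-invariant template the paper uses for the $\mathcal{U}$-left case (the paper itself defers this lemma's proof to its repository). It makes exactly the two adaptations that release requires: the recursive call at $t+i$ precedes the test of $\phi$ because $C$ is needed on $[a,i]$ rather than $[a,i)$, and a completed loop yields the first disjunct via \cref{lemma:lasso-trace-coverage} instead of $None$. Your closing ``subtle point'' is moot, since any interval in $\rightmodifications(I)$ strictly stronger than $I_j$ is by definition a strict right-bound contraction (resp.\ extension) of $I_j$; for optimality of $I''$ it suffices to apply clause (b) at the index where the maximum is attained, or clause (a) when $I''=I$.
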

Full proofs for \cref{lemma:weakening-algorithm-inductive-until-right,lemma:weakening-algorithm-inductive-release-left,lemma:weakening-algorithm-inductive-release-right} are available in our public repository. We use these supporting lemmas to prove the correctness and optimality of \cegiw{}.
\begin{theorem}[Correctness for weakening]\label{theorem:weakening-algorithm-correctness}
    Let $C$ be an \MTL{} context, $I$ and $J$ intervals, $\phi$, $\psi$, and $\psi'$ \MTL{} formulae, $\triangle\in\{\mathcal{U}, \mathcal{R}\}$ a temporal operator, $\pi$ a lasso trace, and $t\in\N$ be a timepoint. Let $I'=\weakenaux(C,\psi\bintemporal{I}\psi',\pi,t)$. If $I'$ is an interval then $\pi \vDash C[\psi\bintemporal{I'}\psi']$, and
    \begin{compactitem}
        \item[1.] If $\triangle=\mathcal{U}$, then $I'$ is an optimal right-bound extension of $I$;
        
        \item[2.] If $\triangle=\mathcal{R}$, then $I'$ is an optimal right-bound contraction of $I$.
    \end{compactitem}
    If $I'=None$, then there exists no such interval in each case.
\end{theorem}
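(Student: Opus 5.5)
The plan is a structural induction on the grammar of contexts, proving the hypothesis $P(C)$ for every context $C$. The theorem then follows by instantiating $P(C)$ at the given timepoint $t$. The statement writes $\pi\vDash$, i.e.\ satisfaction at $t=0$, and the top-level call in \cref{subsec:iterative-weakening} uses $t=0$. So I would prove the stronger pointwise statement $\sat{\pi}{t}C[\psi\bintemporal{I'}\psi']$ for all $t$ and read off the case $t=0$. The universal quantification over $t$ in $P(C)$ is essential, because the inductive cases make recursive calls at shifted timepoints $t+i$.

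For the base case $C=[-]$, \cref{alg:aux} dispatches on $\triangle$. If $\triangle=\mathcal{U}$ it calls \weakenuntildirect, and \cref{lemma:weakening-algorithm-direct-until} gives exactly the required conclusion. It yields either an optimal right-bound extension $I'$ with $\sat{\pi}{t}\psi\until{I'}\psi'$, or $None$ when no such extension exists. If $\triangle=\mathcal{R}$, \cref{lemma:weakening-algorithm-direct-release} gives the analogous conclusion for contractions. Since $[-][\psi\bintemporal{I'}\psi']=\psi\bintemporal{I'}\psi'$ by the definition of context substitution, this establishes $P([-])$.

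For the temporal inductive cases, \cref{alg:aux} dispatches on the outermost constructor of $C$:
\begin{compactitem}
\item[(i)] $C'\until{J}\phi$ is handled by \cref{lemma:weakening-algorithm-inductive-until-left};
\item[(ii)] $\phi\until{J}C'$ by \cref{lemma:weakening-algorithm-inductive-until-right};
\item[(iii)] $C'\release{J}\phi$ by \cref{lemma:weakening-algorithm-inductive-release-left};
\item[(iv)] $\phi\release{J}C'$ by \cref{lemma:weakening-algorithm-inductive-release-right}.
\end{compactitem}
Each of these lemmas takes $P(C')$ and concludes $P(C)$.

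The remaining cases are the propositional ones: $C'\land\phi$, $\phi\land C'$, $C'\lor\phi$ and $\phi\lor C'$. They are elided in \cref{alg:aux}, so I would argue them directly.
\begin{compactitem}
\item For $C'\land\phi$ at time $t$: if $\nsat{\pi}{t}\phi$, no interval can work, since the conjunction fails regardless of the hole, so the algorithm must return $None$. Otherwise the conjunction holds iff $C'[\cdot]$ does, so the result of the recursive call inherits correctness and optimality from $P(C')$.
\item For $C'\lor\phi$: if $\sat{\pi}{t}\phi$, then $\Iorig$ itself satisfies the formula and is trivially optimal by clause (a) of the optimality definition. Otherwise the formula reduces to $C'[\cdot]$ and $P(C')$ applies.
\end{compactitem}
Contexts have no negation on the path to the hole, so these cases together with the temporal ones exhaust the grammar.

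The main obstacle is the reliance on the elided cases. Correctness of the propositional cases depends on implementation details not shown in the paper, and \cref{lemma:weakening-algorithm-inductive-until-right,lemma:weakening-algorithm-inductive-release-left,lemma:weakening-algorithm-inductive-release-right} are deferred to the repository. In the write-up I would state these dependencies explicitly.

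I would also check one subtlety in the $\mathcal{U}$-left lemma. It selects the interval with maximal difference from $\Iorig$. I would confirm that this remains optimal, and not merely correct, by the total orders of \cref{lemma:extension-weakening-order,lemma:contraction-weakening-order}. The argument is that any strictly stronger interval fails at the timepoint that required the maximal one. That timepoint's interval was itself optimal by the induction hypothesis, and any strict contraction of it fails there by clause (b) of the optimality definition.
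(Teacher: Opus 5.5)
Your proposal is correct and follows essentially the same route as the paper. It uses structural induction on contexts with the hypothesis $P(C)$ quantified over all timepoints, discharges the base case by \cref{lemma:weakening-algorithm-direct-until,lemma:weakening-algorithm-direct-release}, and handles the four temporal cases by the corresponding inductive lemmas.

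Your sketches of the propositional cases and of optimality in the $\mathcal{U}$-left case go beyond what the paper writes out; its proof of \cref{lemma:weakening-algorithm-inductive-until-left} argues only satisfaction, not optimality. For the optimality argument, add two small points. First, every admissible until-witness lies at or after the first $\phi$-position, so the critical timepoint always falls inside the window. Second, when the maximum is $\Iorig$ itself, clause (a) applies rather than clause (b).
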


\begin{proof}
    We use the same induction hypothesis $P(C)$ defined for the preceding inductive lemmas. The non-temporal inductive cases are omitted for brevity.
    \begin{description}
    \item \textsc{Base case $[-]$:} By \cref{lemma:weakening-algorithm-direct-until} if $\triangle=\mathcal{U}$, and \cref{lemma:weakening-algorithm-direct-release} if $\triangle=\mathcal{R}$, we have $P([-])$.
    \item \textsc{Inductive case $C\until{J}\phi$:} Assuming $P(C)$, by \cref{lemma:weakening-algorithm-inductive-until-left} we have $P(C\until{J}\phi)$.
    \item \textsc{Inductive case $\phi\until{J}C$:} Assuming $P(C)$, by \cref{lemma:weakening-algorithm-inductive-until-right} we have $P(\phi\until{J}C)$.
    \item \textsc{Inductive case $C\release{J}\phi$:} Assuming $P(C)$, by \cref{lemma:weakening-algorithm-inductive-release-left} we have $P(C\release{J}\phi)$.
    \item \textsc{Inductive case $\phi\release{J}C$:} Assuming $P(C)$, by \cref{lemma:weakening-algorithm-inductive-release-right} we have $P(\phi\release{J}C)$.
    \end{description}
    \qed
\end{proof}

The time complexity of \cref{alg:aux} is $O(|\pi|^{\texttt{td}(\phi)})$. where $\pi$ is the counterexample trace and $\texttt{td}(\phi)$ is the \emph{temporal depth} of the \MTL{} formula $\phi$, i.e.\ the maximum number of nested temporal operators along any path in the syntax tree. In practice, $\texttt{td}(\phi)$ is typically very small.

\section{Evaluation}\label{sec:evaluation}

We evaluate how \textit{effective} interval weakening is in understanding the temporal behaviour of specifications, and how applicable it is to real-world requirements. To this end, we investigate the following research questions:
\begin{itemize}
    \item[\textbf{RQ1:}] How can \cegiw{} be used to explore and diagnose timing margins in \MTL{} specifications during early design? (\cref{subsec:demonstration})
    \item[\textbf{RQ2:}] To what extent do existing real-world requirements provide practical targets for interval weakening, and are such weakenings meaningful in their application domains? (\cref{subsec:case-studies})
\end{itemize}
\textbf{Choosing a model checker.} There are no industrial-strength model checkers for \MTL{} with pointwise semantics~\cite{brihaye2018, akshay2025}, yet many efficient tools exist for linear temporal logic~\cite{cavada2014,holzmann1997} (\LTL{}). Thus, we translate \MTL{} formulae into \LTL{} using the \emph{next} ($X$) operator~\cite[Remark~5.15]{baier2008a} and use an \LTL{} model checker. During preliminary investigation, it was found that symbolic \LTL{} model checkers such as \nuXmv{}~\cite{cavada2014} and \spin{}~\cite{holzmann1997} typically generate minimal counterexample traces. Weakening intervals with these usually only increments or decrements the bound rather than modifying it by a larger amount, which increases the number of calls made to the model checker dramatically. Our implementation uses \nuXmv{} in bounded model checking (BMC) mode, producing multiple counterexamples for a specific bound length, finding the optimal interval for each of them and returning the weakest, making it more likely that we make fewer calls to the model checker, thus improving the algorithm's efficiency. For our implementation we require the user to choose the BMC bound; while theoretical completeness can be preserved as completeness thresholds do exist for BMC~\cite{clarke2004a}, choosing a suitable bound still requires experience.

\subsection{Demonstration of \cegiw{} (RQ1)}\label{subsec:demonstration}

To address \textbf{RQ1}, we use an example based on a model of a foraging robot swarm~\cite{liu2010}. Robots are located in an arena and do a random walk to find food, which they then carry back to their home. Here they recharge and then repeat their foraging task. We model a single robot with a state machine, depicted abstractly in \cref{fig:foraging-fsm}.
\begin{figure}[t]
\centering
\begin{tikzpicture}[node distance=1.5cm and 3.5cm, on grid, auto]
\node[state] (deposit) {\texttt{deposit}};
\node[state, above=of deposit] (movetohome) {\texttt{moveToHome}};
\node[state, right=of movetohome] (grabfood) {\texttt{grabFood}};
\node[state, right=of grabfood] (movetofood) {\texttt{moveToFood}};
\node[state, right=of deposit] (homing) {\texttt{homing}};
\node[state, right=of homing] (scanarena) {\texttt{scanArena}};
\node[state, below=of deposit] (resting) {\texttt{resting}};
\node[left=5em of resting] (inv1) {};
\node[state, right=of resting] (leavinghome) {\texttt{leavingHome}};
\node[state, right=of leavinghome] (randomwalk) {\texttt{randomWalk}};

\path[->]
    (inv1) edge (resting)
    (resting) edge (leavinghome)
    (leavinghome) edge (randomwalk)
    (randomwalk) edge (homing)
    (homing) edge (resting)
    (randomwalk) edge[bend right=75](movetofood)
    (movetofood) edge (homing)
    (movetofood) edge[bend right=20] (scanarena)
    (scanarena) edge[bend right=20] (movetofood)
    (scanarena) edge (randomwalk)
    (movetofood) edge (grabfood)
    (grabfood) edge (movetohome)
    (movetohome) edge (deposit)
    (deposit) edge (resting);
\end{tikzpicture}
\caption{Abstract state transition system for the robot's foraging behaviour. The robot begins in a resting state, then searches for, collects, and deposits food.}
\label{fig:foraging-fsm}
\end{figure}
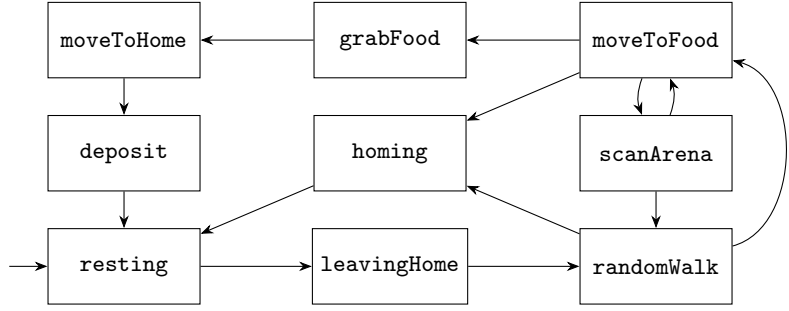
In the concrete transition system $\mathcal{M}$ the robot can remain in a given state for a configurable amount of time before it is forced to move to a next state. The transition system is specified concretely using \SMV{}, the language of the \nuXmv{} model checker~\cite{cavada2014}. We would like to prove that, after leaving the \texttt{resting} state, the robot will return to \texttt{resting} in at most 3 time units, represented by
\begin{equation}\label{eqn:max-return-to-resting-mtl}
\mathcal{M}\vDash\Box(\texttt{resting} \rightarrow \eventually{[1,3]}\texttt{resting})
\end{equation}
%
and translated from \MTL{} to \LTL{} as
\begin{equation}\label{eqn:max-return-to-resting-ltl}
\mathcal{M}\vDash\Box(\texttt{resting} \rightarrow X(\texttt{resting} \lor X(\texttt{resting} \lor X(\texttt{resting})))).
\end{equation}
%
If this does not hold in the transition system, we would like to weaken the interval to produce a new, weaker property that does hold. Using \cegiw{}, we find in the first iteration that no suitable weakening of the interval exists based on the counterexample in \cref{fig:min-to-resting-cex}, which shows an infinite loop between the \texttt{scanArena} and \texttt{moveToFood} states. This suggests a mistake in the modelling of the system, as in the real world the robot's battery would run out of charge.
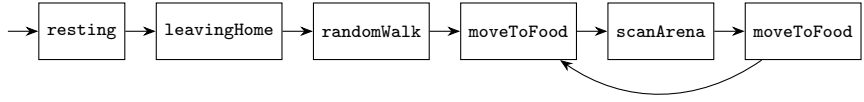
\begin{figure}[t]
\centering
\begin{tikzpicture}[auto,node distance=4mm]
\node[cex] (1) {\scriptsize\texttt{resting}};
\node[left=of 1] (inv1) {};
\node[cex, right=of 1] (2) {\scriptsize\texttt{leavingHome}};
\node[cex, right=of 2] (3) {\scriptsize\texttt{randomWalk}};
\node[cex, right=of 3] (4) {\scriptsize\texttt{moveToFood}};
\node[cex, right=of 4] (5) {\scriptsize\texttt{scanArena}};
\node[cex, right=of 5] (6) {\scriptsize\texttt{moveToFood}};
\node[right=of 6] (inv2) {};

\path[->]
    (inv1) edge (1)
    (1) edge (2)
    (2) edge (3)
    (3) edge (4)
    (4) edge (5)
    (5) edge (6)
    (6) edge[bend right=325] (4);
\end{tikzpicture}
\caption{Infinite lasso counterexample trace for the property in \cref{eqn:max-return-to-resting-mtl}.}
\label{fig:min-to-resting-cex}
\end{figure}
We amend the design by including in the requirements the notion of a battery that decreases as transitions are taken. While we are in \texttt{randomWalk}, \texttt{scanArena}, or \texttt{moveToFood} --- in other words, searching for food --- we monitor the battery level, and if it decreases below a certain threshold we abort and return home to recharge. The modified state transition system is depicted in \cref{fig:foraging-fsm-modified}.
\begin{figure}[t]
\centering
\begin{tikzpicture}[node distance=1.5cm and 3.5cm, on grid, auto]
\draw[thick,dashed] ($(movetofood.north west)+(-0.3,0.3)$) rectangle ($(randomwalk.south east)+(0.3,-0.3)$) node [midway, above=2.25] {\texttt{battery monitor}};
\draw[->,thick,dashed] ($($($(movetofood.north west)+(-0.3,0)$)!0.5!($(randomwalk.south west)+(-0.3,0)$)$)$) -- (homing) node [midway, below] {\texttt{abort}};
\node[state] (deposit) {\texttt{deposit}};
\node[state, above=of deposit] (movetohome) {\texttt{moveToHome}};
\node[state, right=of movetohome] (grabfood) {\texttt{grabFood}};
\node[state, right=of grabfood] (movetofood) {\texttt{moveToFood}};
\node[state, right=of deposit] (homing) {\texttt{homing}};
\node[state, right=of homing] (scanarena) {\texttt{scanArena}};
\node[state, below=of deposit] (resting) {\texttt{resting}};
\node[left=5em of resting] (inv1) {};
\node[state, right=of resting] (leavinghome) {\texttt{leavingHome}};
\node[state, right=of leavinghome] (randomwalk) {\texttt{randomWalk}};

\path[->]
    (inv1) edge (resting)
    (resting) edge (leavinghome)
    (leavinghome) edge (randomwalk)
    (randomwalk) edge (homing)
    (homing) edge (resting)
    (randomwalk) edge[bend right=75](movetofood)
    (movetofood) edge (homing)
    (movetofood) edge[bend right=20] (scanarena)
    (scanarena) edge[bend right=20] (movetofood)
    (scanarena) edge (randomwalk)
    (movetofood) edge (grabfood)
    (grabfood) edge (movetohome)
    (movetohome) edge (deposit)
    (deposit) edge (resting);
\end{tikzpicture}
\caption{Abstract state transition system for the robot's modified foraging behaviour. The battery monitor is represented by the dashed section on the right.}
\label{fig:foraging-fsm-modified}
\end{figure}
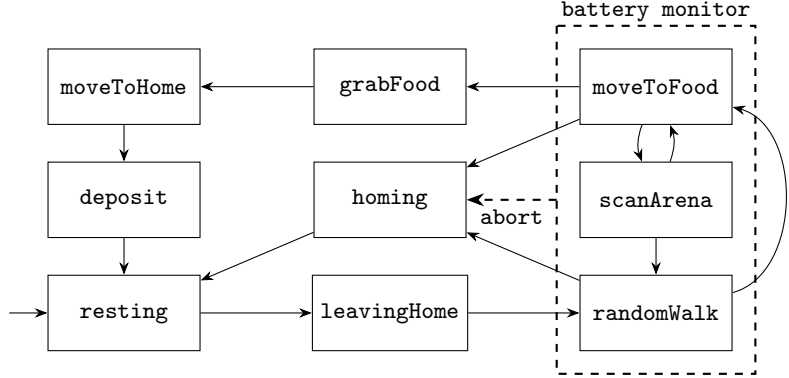
\begin{figure}[t]
\centering
\begin{subfigure}[t]{0.5\textwidth}
    \centering
    \begin{tikzpicture}
    \begin{axis}[
        xbar,
        xmin=1,
        width=5.75cm,
        height=3.5cm,
        xlabel={Interval},
        ylabel={Iteration},
        ytick=data,
        xtick distance=2,
        minor x tick num=1,
        x tick label as interval=false,
        y dir=reverse,
        ymin=-0.5,
        ymax=4.5,
        bar width=1.18em,
    ]
    \addplot+[black,fill=blue!80!white,opacity=0.5] coordinates {(20,4) (19,3) (17,2) (14,1) (3,0)};
    \node[
        anchor=west,
        font=\small,
    ] at (axis cs:14,4) {\textbf{valid}};
    \end{axis}
    \end{tikzpicture}
    \caption{Interval extension for \cref{eqn:max-return-to-resting-mtl}.}
    \label{fig:iterative-interval-extension}
\end{subfigure}%
\begin{subfigure}[t]{0.5\textwidth}
    \centering
    \begin{tikzpicture}
    \begin{axis}[
        xbar=1em,
        xmin=1,
        width=5.75cm,
        height=3.5cm,
        xlabel={Interval},
        ytick=data,
        xtick distance=2,
        minor x tick num=1,
        x tick label as interval=false,
        y dir=reverse,
        ymin=-0.5,
        ymax=1.5,
        bar width=2.95em,
    ]
    \addplot+[black,fill=blue!80!white,opacity=0.5] coordinates {(20,0) (3,1)};
    \node[
        anchor=west,
        font=\small,
    ] at (axis cs:3,1) {\textbf{valid}};
    \end{axis}
    \end{tikzpicture}
    \caption{Interval contraction for \cref{eqn:min-return-to-resting}.}
    \label{fig:iterative-interval-contraction}
\end{subfigure}
\caption{Iterative interval weakening to generate optimal, valid intervals.}
\end{figure}
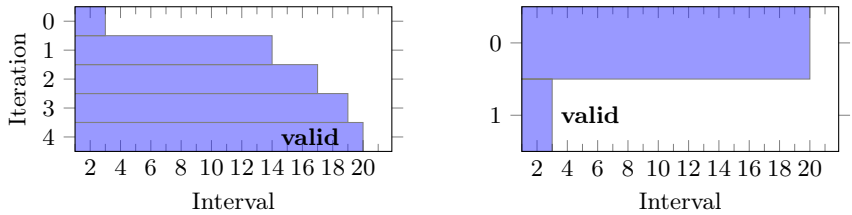
We check our desired property (\cref{eqn:max-return-to-resting-mtl}) against our amended model, and can see in \cref{fig:iterative-interval-extension} that in four iterations of \cegiw{} we extended the interval, and ended with the optimal interval which was then verified to hold in the system. So, the optimal property that holds in our amended system is
\begin{equation}\label{eqn:max-return-to-resting-corrected}
    \mathcal{M}\vDash\Box(\texttt{resting} \rightarrow \eventually{[1,20]}\texttt{resting}).
\end{equation}
Another property we are interested in is not the maximum time that the robot can spend away from home, but the \emph{minimum}. We wish the robot to spend at least 20 time units away from home, formalised as
\begin{equation}\label{eqn:min-return-to-resting}
\mathcal{M}\vDash\Box((\texttt{resting}\land\eventually{[1,1]}\neg\texttt{resting}) \rightarrow \generally{[1,20]}\neg\texttt{resting}).
\end{equation}
Again, we check this against our modified model and can see in \cref{fig:iterative-interval-contraction} that it took only one iteration to contract the interval, reaching the optimal interval which was then verified to hold in the system as
\begin{equation}\label{eqn:min-return-to-resting-corrected}
\mathcal{M}\vDash\Box((\texttt{resting}\land\eventually{[1,1]}\neg\texttt{resting}) \rightarrow \generally{[1,3]}\neg\texttt{resting}).
\end{equation}
By using \cegiw{}, we first identified that the original specification had a design flaw that allowed unwanted infinite loops. While a traditional model checker would conclude that the given specification does not hold, it cannot itself deduce that \emph{no} weakening exists. After modifying the specification, we then deduced both the maximum time that the robot can stay away from home, as well as the minimum time. \cegiw{} can thus provide value in both analysing existing requirements and supporting system modelling.

\subsection{Applicability of Interval Weakening to Real-World Requirements (RQ2)}\label{subsec:case-studies}
\begin{table}[t]
\caption{Interval-weakenable requirements in \fret{} case studies.}
\label{table:weakenable-reqs-case-studies}
\centering
\begin{tabular}{ |l|r|r| } 
\hline
\textbf{Case study} & \textbf{Total requirements} & \textbf{Weakenable requirements} \\\hline
Mechanical lung ventilator~\cite{farrell2024} & 121 & 57 \\\hline
Autonomous drone~\cite{sheridan2025} & 62 & 19 \\\hline
Lift-plus-cruise aircraft~\cite{pressburger2023} & 49 & 29 \\\hline
Aircraft engine controller~\cite{farrell2022} & 42 & 0 \\\hline
Inspection rover~\cite{bourbouh2021} & 15 & 1 \\\hline
Grasping for debris removal~\cite{farrell2022a} & 20 & 0 \\\hline
Robotic patterns~\cite{vazquez2024} & 36 & 11 \\\hline
LMCP challenges~\cite{mavridou2020} & 74 & 7 \\\hline
\textbf{Total} & \textbf{419} & \textbf{124} \\\hline
\end{tabular}
\end{table}
\begin{figure*}[t]
    \centering
    \begin{subfigure}[t]{0.475\textwidth}
        \texttt{\textcolor{fret_condition}{upon ControlLoopStart}
        \textcolor{fret_component}{System}
        \textcolor{fret_shall}{shall}
        \textcolor{fret_timing}{within 12 milliseconds}
        \textcolor{fret_response}{satisfy ControlLoopFinish}}
        \caption{Autonomous drone requirement REQ018 describing the maximum time the control loop can take to complete.}
        \label{fig:example-fret-requirement-REQ018}
    \end{subfigure}%
    \hspace{1em}%
    \begin{subfigure}[t]{0.475\textwidth}
        \texttt{\textcolor{fret_condition}{if powerFailure}
        \textcolor{fret_component}{System}
        \textcolor{fret_shall}{shall}
        \textcolor{fret_timing}{for 120 minutes}
        \textcolor{fret_response}{satisfy !off}}
        \vspace{\baselineskip}
        \caption{Mechanical lung ventilator requirement FUN37 describing how long the system must stay on after power failure.}
        \label{fig:example-fret-requirement-FUN37}
    \end{subfigure}%
    \caption{Example \fretish{} requirements from the case studies. Both are taken from systems that are fully implemented and operational in real-world settings.}
\end{figure*}
In this section, we analyse existing requirements from a number of real-world case studies to assess how often interval weakening is applicable, and how weakened timing bounds can be interpreted in their respective domains. These requirements are formalised using the Formal Requirements Elicitation Tool (\fret{})~\cite{giannakopoulou2020a} and are written in \fretish{}, a structured natural language that can be translated to \MTL{}~\cite{giannakopoulou2021}. \fretish{} requirements can have a \texttt{\textcolor{fret_timing}{timing}} field, on which we can use interval weakening to weaken the requirement itself. The number of requirements that can be weakened using interval weakening per case study is shown in \cref{table:weakenable-reqs-case-studies}. As an example, the requirement in \cref{fig:example-fret-requirement-REQ018} from the autonomous drone case study~\cite{sheridan2025} uses the \texttt{\textcolor{fret_timing}{within 12 milliseconds}} timing which specifies that if the condition holds in one state, then the consequent must hold within the next twelve states (assuming that state transitions correspond to a millisecond of time passing). The \MTL{} translation of this timing corresponds to the \MTL{} temporal operator $\Diamond_{[0,12]}$, and so the requirement corresponds to
%
%
\begin{equation}\label{eqn:example-fret-requirement-REQ018}
\Box(\texttt{\textcolor{fret_condition}{ControlLoopStart}}\rightarrow \eventually{[0,12]}(\texttt{\textcolor{fret_response}{ControlLoopFinish}})).
\end{equation}
Under system degradation, for example if the onboard communications network is degraded so that commands take longer to reach control surfaces, we may not be able to guarantee this and so would have to weaken the property by extending the interval, giving more time for the system to run its control loop, with an example weakening in
\begin{equation}\label{eqn:example-fret-requirement-REQ018-weakened}
\Box(\texttt{\textcolor{fret_condition}{ControlLoopStart}}\rightarrow \eventually{[0,24]}(\texttt{\textcolor{fret_response}{ControlLoopFinish}})).
\end{equation}
An example requirement from the mechanical lung ventilator case study~\cite{farrell2024} is shown in \cref{fig:example-fret-requirement-FUN37}, and as the \fretish{} timing \texttt{\textcolor{fret_timing}{for 120 minutes}} corresponds to the \MTL{} temporal operator $\Box_{[1,120]}$, the corresponding \MTL{} property is
\begin{equation}\label{eqn:example-fret-requirement-FUN37}
\Box(\texttt{\textcolor{fret_condition}{powerFailure}} \rightarrow  \generally{[1,120]}(\neg\texttt{\textcolor{fret_response}{off}})).
\end{equation}
This is a regulatory requirement~\cite{iso2023} and so, if it does not hold in the degraded system, it is critical to know by exactly how much it is violated. We may only be able to guarantee that the ventilator will stay on for at most 90 minutes after \texttt{\textcolor{fret_condition}{powerFailure}}, producing the weakening
\begin{equation}\label{eqn:example-fret-requirement-FUN37-weakened}
\Box(\texttt{\textcolor{fret_condition}{powerFailure}} \rightarrow  \generally{[1,90]}(\neg\texttt{\textcolor{fret_response}{off}})).
\end{equation}
Of the 127 interval-weakenable requirements in \cref{table:weakenable-reqs-case-studies}, 116 can be weakened by interval extension as in \cref{eqn:example-fret-requirement-REQ018-weakened}, and 11 by interval contraction as in \cref{eqn:example-fret-requirement-FUN37-weakened}.

Several case studies in \cref{table:weakenable-reqs-case-studies} have few or no requirements that can be weakened with interval weakening. These requirements are typically liveness properties specified with the \texttt{\textcolor{fret_timing}{eventually}} timing, which cannot be weakened further, or safety properties specified with the \texttt{\textcolor{fret_timing}{always}} timing, for which interval weakening would not be appropriate. For example, from the grasping for debris removal case study~\cite{farrell2022a},
\begin{equation}\label{eqn:example-fret-requirement-always}
\texttt{\textcolor{fret_component}{SV}
\textcolor{fret_shall}{shall}
\textcolor{fret_timing}{always}
\textcolor{fret_response}{satisfy !collide(SV, TGT)}}.
\end{equation}
To answer \textbf{RQ2}, we have shown that interval weakening is applicable to a substantial proportion of existing temporal requirements, and that such weakenings have meaningful interpretations in safety-critical domains. We have also answered \textbf{RQ1} by using \cegiw{} to identify problems in a specification, and then deduce useful timing properties in the fixed system.

\section{Conclusion}\label{sec:conclusion}

We present \cegiw{}, a novel algorithm for weakening intervals in \MTL{} properties of degraded systems, and prove its correctness and optimality. We demonstrate how \cegiw{} can be used during the design phase to understand system limitations under degradation, and explore how the formalised requirements of a number of real-world systems may be weakened against real implementations. This shows the applicability of \cegiw{} in the design of safety-critical systems for understanding the impacts of system degradation.

\textbf{Future work.}
A current limitation is that we only weaken on the right-hand-side of intervals, when both left- and right-bound modifications can produce valid weakenings. Restricting to only right-bound modifications creates a total order over the search space, so there is always a single optimum when multiple choices exist. Expanding to both left- and right-bound modifications creates a partial order over generated intervals, and so choosing between intervals is much less obvious. Future work will also explore other types of weakening, making syntactic changes to formulae beyond intervals.

Another limitation is that only a single interval can be weakened, and that the user must choose which one. Expanding to use multi-objective optimisation with Pareto optimality will improve the usability and applicability of this approach.

\textbf{Availability.} The implementation of CEGIW, full proofs, and all case study artefacts are available in our public repository\footnote{\url{https://github.com/benmandrew/CEGIW}}. Scripts are provided to reproduce all tables and examples reported in \cref{sec:evaluation}.

\printbibliography

@inproceedings{aarts2012,
  title = {Automata {{Learning}} through {{Counterexample Guided Abstraction Refinement}}},
  booktitle = {{{Formal Methods}}},
  author = {Aarts, Fides and Heidarian, Faranak and Kuppens, Harco and Olsen, Petur and Vaandrager, Frits},
  editor = {Giannakopoulou, Dimitra and M{\'e}ry, Dominique},
  year = 2012,
  pages = {10--27},
  publisher = {Springer},
  address = {Berlin, Heidelberg},
  %doi = {10.1007/978-3-642-32759-9_4},
  abstract = {Abstraction is the key when learning behavioral models of realistic systems. Hence, in most practical applications where automata learning is used to construct models of software components, researchers manually define abstractions which, depending on the history, map a large set of concrete events to a small set of abstract events that can be handled by automata learning tools. In this article, we show how such abstractions can be constructed fully automatically for a restricted class of extended finite state machines in which one can test for equality of data parameters, but no operations on data are allowed. Our approach uses counterexample-guided abstraction refinement: whenever the current abstraction is too coarse and induces nondeterministic behavior, the abstraction is refined automatically. Using Tomte, a prototype tool implementing our algorithm, we have succeeded to learn -- fully automatically -- models of several realistic software components, including the biometric passport and the SIP protocol.},
  isbn = {978-3-642-32759-9},
  langid = {english},
  keywords = {Automaton Learn,Equivalence Query,Input Symbol,Output Symbol,Session Initiation Protocol},
  file = {/home/y19056ba/Zotero/storage/QA3NB7IF/Aarts et al. - 2012 - Automata Learning through Counterexample Guided Abstraction Refinement.pdf}
}

@inproceedings{abreu2023a,
  title = {Exploring {{Automatic Specification Repair}} in {{Dafny Programs}}},
  booktitle = {{International Conference on Automated Software Engineering Workshops}},
  author = {Abreu, Alexandre and Macedo, Nuno and Mendes, Alexandra},
  year = 2023,
  month = sep,
  pages = {105--112},
  issn = {2151-0849},
 % doi = {10.1109/ASEW60602.2023.00019},
  urldate = {2024-11-27},
  abstract = {Formal verification has become increasingly crucial in ensuring the accurate and secure functioning of modern software systems. Given a specification of the desired behaviour, i.e. a contract, a program is considered to be correct when all possible executions guarantee the specification. Should the software fail to behave as expected, then a bug is present. Most existing research assumes that the bug is present in the implementation, but it is also often the case that the specified expectations are incorrect, meaning that it is the specification that must be repaired. Research and tools for providing alternative specifications that fix details missing during contract definition, considering that the implementation is correct, are scarce. This paper presents a preliminary tool, focused on Dafny programs, for automatic specification repair in contract programming. Given a Dafny program that fails to verify, the tool suggests corrections that repair the specification. Our approach is inspired by a technique previously proposed for another contract programming language and relies on Daikon for dynamic invariant inference. Although the tool is focused on Dafny, it makes use of specification repair techniques that are generally applicable to programming languages that support contracts. Such a tool can be valuable in various scenarios, such as when programmers have a reference implementation and need to analyse their contract options, or in educational contexts, where it can provide students with hints to correct their contracts. The results of the evaluation show that the approach is feasible in Dafny and that the overall process has reasonable performance but that there are stages of the process that need further imnrovements.},
  keywords = {automatic program repair,Computer bugs,Computer languages,Conferences,contract programming,contract repair,Contracts,Dafny,Dynamic programming,Maintenance engineering,Software systems},
  file = {/home/y19056ba/Zotero/storage/HFXX45XC/Abreu et al. - 2023 - Exploring Automatic Specification Repair in Dafny Programs.pdf;/home/y19056ba/Zotero/storage/LJZ2P47L/10298767.html}
}

@misc{akshay2025,
  title = {Efficient {{Verification}} of {{Metric Temporal Properties}} with {{Past}} in {{Pointwise Semantics}}},
  author = {Akshay, S. and Contractor, Prerak and Gastin, Paul and Govind, R. and Srivathsan, B.},
  year = 2025,
  month = oct,
  journal = {arXiv.org},
  urldate = {2025-11-24},
  howpublished = {https://arxiv.org/abs/2510.14699v1},
  langid = {english}
}

@article{alur1993,
  title = {Real-{{Time Logics}}: {{Complexity}} and {{Expressiveness}}},
  shorttitle = {Real-{{Time Logics}}},
  author = {Alur, Rajeev and Henzinger, Thomas A.},
  year = 1993,
  month = may,
  journal = {Information and Computation},
  volume = {104},
  number = {1},
  pages = {35--77},
  issn = {0890-5401},
 % doi = {10.1006/inco.1993.1025},
  urldate = {2025-11-18},
  abstract = {The theory of the natural numbers with linear order and monadic predicates underlies propositional linear temporal logic. To study temporal logics that are suitable for reasoning about real-time systems, we combine this classical theory of infinite state sequences with a theory of discrete time, via a monotonic function that maps every state to its time. The resulting theory of timed state sequences is shown to be decidable, albeit nonelementary, and its expressive power is characterized by {$\omega$}-regular sets. Several more expressive variants are proved to be highly undecidable. This framework allows us to classify a wide variety of real-time logics according to their complexity and expressiveness. Indeed, it follows that most formalisms proposed in the literature cannot be decided. We are, however, able to identify two elementary real-time temporal logics as expressively complete fragments of the theory of timed state sequences, and we present tableau-based decision procedures for checking validity. Consequently, these two formalisms are well-suited for the specification and verification of real-time systems.},
  file = {/home/y19056ba/Zotero/storage/T99UTBF6/Alur and Henzinger - 1993 - Real-Time Logics Complexity and Expressiveness.pdf;/home/y19056ba/Zotero/storage/HRRLXPLT/S0890540183710254.html}
}

@inproceedings{alur2013,
  title = {Syntax-Guided Synthesis},
  booktitle = {{{Formal Methods}} in {{Computer-Aided Design}}},
  author = {Alur, Rajeev and Bodik, Rastislav and Juniwal, Garvit and Martin, Milo M. K. and Raghothaman, Mukund and Seshia, Sanjit A. and Singh, Rishabh and {Solar-Lezama}, Armando and Torlak, Emina and Udupa, Abhishek},
  year = 2013,
  month = oct,
  pages = {1--8},
 % doi = {10.1109/FMCAD.2013.6679385},
  urldate = {2025-01-10},
  abstract = {The classical formulation of the program-synthesis problem is to find a program that meets a correctness specification given as a logical formula. Recent work on program synthesis and program optimization illustrates many potential benefits of allowing the user to supplement the logical specification with a syntactic template that constrains the space of allowed implementations. Our goal is to identify the core computational problem common to these proposals in a logical framework. The input to the syntax-guided synthesis problem (SyGuS) consists of a background theory, a semantic correctness specification for the desired program given by a logical formula, and a syntactic set of candidate implementations given by a grammar. The computational problem then is to find an implementation from the set of candidate expressions so that it satisfies the specification in the given theory. We describe three different instantiations of the counter-example-guided-inductive-synthesis (CEGIS) strategy for solving the synthesis problem, report on prototype implementations, and present experimental results on an initial set of benchmarks.},
  keywords = {Concrete,Grammar,Heuristic algorithms,Libraries,Production,Search problems,Syntactics},
  file = {/home/y19056ba/Zotero/storage/P77STNPH/Alur et al. - 2013 - Syntax-guided synthesis.pdf;/home/y19056ba/Zotero/storage/7TE7KAJ8/6679385.html}
}

@inproceedings{alur2013a,
  title = {Counter-Strategy Guided Refinement of {{GR}}(1) Temporal Logic Specifications},
  booktitle = {{{Formal Methods}} in {{Computer-Aided Design}}},
  author = {Alur, Rajeev and Moarref, Salar and Topcu, Ufuk},
  year = 2013,
  month = oct,
  pages = {26--33},
 % doi = {10.1109/FMCAD.2013.6679387},
  urldate = {2025-05-19},
  abstract = {The reactive synthesis problem is to find a finite-state controller that satisfies a given temporal-logic specification regardless of how its environment behaves. Developing a formal specification is a challenging and tedious task and initial specifications are often unrealizable. In many cases, the source of unrealizability is the lack of adequate assumptions on the environment of the system. In this paper, we consider the problem of automatically correcting an unrealizable specification given in the generalized reactivity (1) fragment of linear temporal logic by adding assumptions on the environment. When a temporal-logic specification is unrealizable, the synthesis algorithm computes a counter-strategy as a witness. Our algorithm then analyzes this counter-strategy and synthesizes a set of candidate environment assumptions that can be used to remove the counter-strategy from the environment's possible behaviors. We demonstrate the applicability of our approach with several case studies.},
  keywords = {Abstracts,Algorithm design and analysis,Educational institutions,Games,Polynomials,Transducers},
  file = {/home/y19056ba/Zotero/storage/H2N2MKYX/Alur et al. - 2013 - Counter-strategy guided refinement of GR(1) temporal logic specifications.pdf;/home/y19056ba/Zotero/storage/PLNN6MWV/6679387.html}
}

@inproceedings{andrew2026,
  title = {Weakening {{Goals}} in~{{Logical Specifications}}},
  booktitle = {Rigorous {{State-Based Methods}}},
  author = {Andrew, Ben M.},
  editor = {Leuschel, Michael and Ishikawa, Fuyuki},
  year = 2026,
  pages = {349--353},
  publisher = {Springer},
  address = {Cham},
 % doi = {10.1007/978-3-031-94533-5_22},
  abstract = {Logical specifications are widely used to represent software systems and their desired properties. Under system degradation or environmental changes, commonly seen in complex real-world robotic systems, these properties may no longer hold and so traditional verification methods will simply fail to construct a proof. However, weaker versions of these properties do still hold and can be useful for understanding the system's behaviour in uncertain conditions, as well as aiding compositional verification. We present a counterexample-guided technique for iteratively weakening properties, apply it to propositional logic specifications, and discuss planned extensions to state-based representations.},
  isbn = {978-3-031-94533-5},
  langid = {english},
  file = {/home/y19056ba/Zotero/storage/BHN8ERH9/Andrew - 2026 - Weakening Goals in Logical Specifications.pdf}
}

@book{baier2008a,
  title = {Principles of {{Model Checking}}},
  author = {Baier, Christel and Katoen, Joost-Pieter},
  year = 2008,
  month = apr,
  publisher = {MIT Press},
  googlebooks = {5dvxCwAAQBAJ},
  isbn = {978-0-262-30403-0},
  langid = {english}
}

@inproceedings{bourbouh2021,
  title = {Integrating {{Formal Verification}} and {{Assurance}}: {{An Inspection Rover Case Study}}},
  shorttitle = {Integrating {{Formal Verification}} and {{Assurance}}},
  booktitle = {{{NASA Formal Methods}}},
  author = {Bourbouh, Hamza and Farrell, Marie and Mavridou, Anastasia and Sljivo, Irfan and Brat, Guillaume and Dennis, Louise A. and Fisher, Michael},
  editor = {Dutle, Aaron and Moscato, Mariano M. and Titolo, Laura and Mu{\~n}oz, C{\'e}sar A. and Perez, Ivan},
  year = 2021,
  %volume = {12673},
  pages = {53--71},
  publisher = {Springer International Publishing},
  address = {Cham},
  %doi = {10.1007/978-3-030-76384-8_4},
  urldate = {2024-10-23},
  abstract = {The complexity and flexibility of autonomous robotic systems necessitates a range of distinct verification tools. This presents new challenges not only for design verification but also for assurance approaches. Combining the distinct formal verification tools, while maintaining sufficient formal coherence to provide compelling assurance evidence is difficult, often being abandoned for less formal approaches. In this paper we demonstrate, through a case study, how a variety of distinct formal techniques can be brought together in order to develop a justifiable assurance case. We use the AdvoCATE assurance case tool to guide our analyses and to integrate the artifacts from the formal methods that we use, namely: fret, cocosim and Event-B. While we present our methodology as applied to a specific Inspection Rover case study, we believe that this combination provides benefits in maintaining coherent formal links across development and assurance processes for a wide range of autonomous robotic systems.},
  isbn = {978-3-030-76383-1 978-3-030-76384-8},
  langid = {english},
  file = {/home/y19056ba/Zotero/storage/9AQCEPTK/Bourbouh et al. - 2021 - Integrating Formal Verification and Assurance An Inspection Rover Case Study.pdf}
}

@inproceedings{brihaye2018,
  title = {Efficient {{Algorithms}} and {{Tools}} for {{MITL Model-Checking}} and {{Synthesis}}},
  booktitle = {{International Conference on Engineering of Complex Computer Systems}},
  author = {Brihaye, Thomas and Geeraerts, Gilles and Ho, Hsi-Ming and Milchior, Arthur and Monmege, Benjamin},
  year = 2018,
  month = dec,
  pages = {180--184},
  %doi = {10.1109/ICECCS2018.2018.00027},
  urldate = {2025-11-24},
  abstract = {Metric Interval Temporal Logic (MITL) is an extension of the classical Linear Time Logic (LTL) that can be used to characterise real-time properties of computer systems. While the practical interest of MITL is undeniable, there is still today a remarkable lack of tool support for this logic. In this short paper, we report on our on-going work effort to complete the theoretical knowledge about MITL. We also report on our recently introduced tool MightyL, which translates MITL formulae into timed automata, enabling efficient model-checking of this logic. Finally, we sketch the future directions of our current line of research, which will be to extend MightyL to support reactive synthesis of MITL properties.},
  keywords = {Automata,Clocks,Computational modeling,Cost accounting,MITL,Model-checking,Real-time systems,Semantics,Synthesis,Timed automata,Tools},
  file = {/home/y19056ba/Zotero/storage/PIUAIP2Y/Brihaye et al. - 2018 - Efficient Algorithms and Tools for MITL Model-Checking and Synthesis.pdf;/home/y19056ba/Zotero/storage/U8M5G95L/8595072.html}
}

@inproceedings{brizzio2023,
  title = {Automated {{Repair}} of {{Unrealisable LTL Specifications Guided}} by {{Model Counting}}},
  booktitle = {{{Genetic}} and {{Evolutionary Computation Conference}}},
  author = {Brizzio, Mat{\'i}as and Cordy, Maxime and Papadakis, Mike and S{\'a}nchez, C{\'e}sar and Aguirre, Nazareno and Degiovanni, Renzo},
  year = 2023,
  month = jul,
  pages = {1499--1507},
  publisher = {Association for Computing Machinery},
  address = {New York, NY, USA},
  %doi = {10.1145/3583131.3590454},
  urldate = {2025-05-24},
  abstract = {The reactive synthesis problem consists of automatically producing correct-by-construction operational models of systems from high-level formal specifications of their behaviours. However, specifications are often unrealisable, meaning that no system can be synthesised from the specification. To deal with this problem, we present AuRUS, a search-based approach to repair unrealisable Linear-Time Temporal Logic (LTL) specifications. AuRUS aims at generating solutions that are similar to the original specifications by using the notions of syntactic and semantic similarities. Intuitively, the syntactic similarity measures the text similarity between the specifications, while the semantic similarity measures the number of behaviours preserved/removed by the candidate repair. We propose a new heuristic based on model counting to approximate semantic similarity. We empirically assess AuRUS on many unrealisable specifications taken from different benchmarks and show that it can successfully repair all of them. Also, compared to related techniques, AuRUS can produce many unique solutions while showing more scalability.},
  isbn = {979-8-4007-0119-1},
  file = {/home/y19056ba/Zotero/storage/3UP4BE2E/Brizzio et al. - 2023 - Automated Repair of Unrealisable LTL Specifications Guided by Model Counting.pdf}
}

@inproceedings{cavada2014,
  title = {The {{\textsc{nuXmv} Symbolic Model Checker}}},
  booktitle = {Computer {{Aided Verification}}},
  author = {Cavada, Roberto and Cimatti, Alessandro and Dorigatti, Michele and Griggio, Alberto and Mariotti, Alessandro and Micheli, Andrea and Mover, Sergio and Roveri, Marco and Tonetta, Stefano},
  editor = {Biere, Armin and Bloem, Roderick},
  year = 2014,
  pages = {334--342},
  publisher = {Springer International Publishing},
  address = {Cham},
 % doi = {10.1007/978-3-319-08867-9_22},
  abstract = {This paper describes the nuXmv symbolic model checker for finite- and infinite-state synchronous transition systems. nuXmv is the evolution of the nuXmv open source model checker. It builds on and extends nuXmv along two main directions. For finite-state systems it complements the basic verification techniques of nuXmv with state-of-the-art verification algorithms. For infinite-state systems, it extends the nuXmv language with new data types, namely Integers and Reals, and it provides advanced SMT-based model checking techniques.},
  isbn = {978-3-319-08867-9},
  langid = {english},
  keywords = {Model Check,Model Check Problem,Predicate Abstraction,Software Model Check,Symbolic Model Checker},
  file = {/home/y19056ba/Zotero/storage/NH4Z8UQL/Cavada et al. - 2014 - The nuXmv Symbolic Model Checker.pdf}
}

@inproceedings{cerqueira2022,
  title = {Timely {{Specification Repair}} for {{Alloy}} 6},
  booktitle = {Software {{Engineering}} and {{Formal Methods}}},
  author = {Cerqueira, Jorge and Cunha, Alcino and Macedo, Nuno},
  editor = {Schlingloff, Bernd-Holger and Chai, Ming},
  year = 2022,
  pages = {288--303},
  publisher = {Springer International Publishing},
  address = {Cham},
  %doi = {10.1007/978-3-031-17108-6_18},
  abstract = {This paper proposes the first mutation-based technique for the repair of Alloy 6 first-order temporal logic specifications. This technique was developed with the educational context in mind, in particular, to repair submissions for specification challenges, as allowed, for example, in the Alloy4Fun web-platform. Given an oracle and an incorrect submission, the proposed technique searches for syntactic mutations that lead to a correct specification, using previous counterexamples to quickly prune the search space, thus enabling timely feedback to students. Evaluation shows that, not only is the technique feasible for repairing temporal logic specifications, but also outperforms existing techniques for non-temporal Alloy specifications in the context of educational challenges.},
  isbn = {978-3-031-17108-6},
  langid = {english},
  keywords = {Alloy,First-order temporal logic,Formal methods education,Specification repair},
  file = {/home/y19056ba/Zotero/storage/254RAXET/Cerqueira et al. - 2022 - Timely Specification Repair for Alloy 6.pdf}
}

@inproceedings{clarke2000,
  title = {Counterexample-{{Guided Abstraction Refinement}}},
  booktitle = {Computer {{Aided Verification}}},
  author = {Clarke, Edmund and Grumberg, Orna and Jha, Somesh and Lu, Yuan and Veith, Helmut},
  editor = {Emerson, E. Allen and Sistla, Aravinda Prasad},
  year = 2000,
  pages = {154--169},
  publisher = {Springer},
  address = {Berlin, Heidelberg},
  %doi = {10.1007/10722167_15},
  abstract = {We present an automatic iterative abstraction-refinement methodology in which the initial abstract model is generated by an automatic analysis of the control structures in the program to be verified. Abstract models may admit erroneous (or ``spurious'') counterexamples. We devise new symbolic techniques which analyze such counterexamples and refine the abstract model correspondingly. The refinement algorithm keeps the size of the abstract state space small due to the use of abstraction functions which distinguish many degrees of abstraction for each program variable. We describe an implementation of our methodology in NuSMV. Practical experiments including a large Fujitsu IP core design with about 500 latches and 10000 lines of SMV code confirm the effectiveness of our approach.},
  isbn = {978-3-540-45047-4},
  langid = {english},
  keywords = {Abstract Model,Atomic Formula,Kripke Structure,Localization Reduction,Model Check},
  file = {/home/y19056ba/Zotero/storage/RNMC5FTY/Clarke et al. - 2000 - Counterexample-Guided Abstraction Refinement.pdf}
}

@incollection{clarke2004a,
  title = {Completeness and {{Complexity}} of {{Bounded Model Checking}}},
  booktitle = {Verification, {{Model Checking}}, and {{Abstract Interpretation}}},
  author = {Clarke, Edmund and Kroening, Daniel and Ouaknine, Jo{\"e}l and Strichman, Ofer},
  editor = {Goos, Gerhard and Hartmanis, Juris and Van Leeuwen, Jan and Steffen, Bernhard and Levi, Giorgio},
  year = 2004,
 % volume = {2937},
  pages = {85--96},
  publisher = {Springer Berlin Heidelberg},
  address = {Berlin, Heidelberg},
  doi = {10.1007/978-3-540-24622-0_9},
  urldate = {2025-11-19},
  abstract = {For every finite model M and an LTL property {$\phi$}, there exists a number CT (the Completeness Threshold ) such that if there is no counterexample to {$\phi$} in M of length CT or less, then M \textbar = {$\phi$}. Finding this number, if it is sufficiently small, offers a practical method for making Bounded Model Checking complete. We describe how to compute an over-approximation to CT for a general LTL property using B\textasciidieresis uchi automata, following the Vardi-Wolper LTL model checking framework. Based on the value of CT , we prove that the complexity of standard SAT-based BMC is doubly exponential, and that consequently there is a complexity gap of an exponent between this procedure and standard LTL model checking. We discuss ways to bridge this gap.},
  isbn = {978-3-540-20803-7 978-3-540-24622-0},
  langid = {english},
  file = {/home/y19056ba/Zotero/storage/5BRZ2EPF/Clarke et al. - 2004 - Completeness and Complexity of Bounded Model Checking.pdf}
}

@incollection{cobleigh2003,
  title = {Learning {{Assumptions}} for {{Compositional Verification}}},
  booktitle = {Tools and {{Algorithms}} for the {{Construction}} and {{Analysis}} of {{Systems}}},
  author = {Cobleigh, Jamieson M. and Giannakopoulou, Dimitra and P{\u a}s{\u a}reanu, Corina S.},
  editor = {Goos, Gerhard and Hartmanis, Juris and Van Leeuwen, Jan and Garavel, Hubert and Hatcliff, John},
  year = 2003,
 % volume = {2619},
  pages = {331--346},
  publisher = {Springer Berlin Heidelberg},
  address = {Berlin, Heidelberg},
 % doi = {10.1007/3-540-36577-X_24},
  urldate = {2025-01-06},
  abstract = {Compositional verification is a promising approach toaddressing the state explosion problem associated wihh model checking. One composition\_ technique advocates proving properties of a system by checking properties of its components in an assume-guarantee ......... s\_le. However, the application of thistech\_ique is \_\_fficult because it involves non-trivial human input. This paper presents a novel framework for performing assume-guarantee reasoning in an incremental and fully automated fashion. To check a component against a property, our approach generates assumptions that the environment needs to satisfy for the property to hold. These assumptions are then discharged on the rest of the system. Assumptions are computed by a learning algorithm. They are initially approximate, but become \_adually more precise by means of counterexamples obtained by model checking the component and its environment, alternately. This iterative process may at any stage conclude that the property is either true or false in the system. We have implemented our approach in the LTSA tool and applied it to the analysis of a NASA system.},
  isbn = {978-3-540-00898-9 978-3-540-36577-8},
  langid = {english},
  file = {/home/y19056ba/Zotero/storage/F5DG3MRS/Cobleigh et al. - 2003 - Learning Assumptions for Compositional Verification.pdf}
}

@inproceedings{farrell2022,
  title = {{{FRETting About Requirements}}: {{Formalised Requirements}} for~an~{{Aircraft Engine Controller}}},
  shorttitle = {{{FRETting About Requirements}}},
  booktitle = {Requirements {{Engineering}}: {{Foundation}} for {{Software Quality}}},
  author = {Farrell, Marie and Luckcuck, Matt and Sheridan, Ois{\'i}n and Monahan, Rosemary},
  editor = {Gervasi, Vincenzo and Vogelsang, Andreas},
  year = 2022,
  pages = {96--111},
  publisher = {Springer International Publishing},
  address = {Cham},
  doi = {10.1007/978-3-030-98464-9_9},
  isbn = {978-3-030-98464-9},
  langid = {english}
}

@article{farrell2022a,
  title = {Formal {{Modelling}} and {{Runtime Verification}} of {{Autonomous Grasping}} for {{Active Debris Removal}}},
  author = {Farrell, Marie and Mavrakis, Nikos and Ferrando, Angelo and Dixon, Clare and Gao, Yang},
  year = 2022,
  month = jan,
  journal = {Frontiers in Robotics and AI},
  volume = {8},
  publisher = {Frontiers},
  issn = {2296-9144},
  doi = {10.3389/frobt.2021.639282},
  urldate = {2025-11-26},
  langid = {english}
}

@inproceedings{farrell2024,
  title = {{{FRETting}} and~{{Formal Modelling}}: {{A Mechanical Lung Ventilator}}},
  shorttitle = {{{FRETting}} and~{{Formal Modelling}}},
  booktitle = {Rigorous {{State-Based Methods}}},
  author = {Farrell, Marie and Luckcuck, Matt and Monahan, Rosemary and Reynolds, Conor and Sheridan, Ois{\'i}n},
  editor = {Bonfanti, Silvia and Gargantini, Angelo and Leuschel, Michael and Riccobene, Elvinia and Scandurra, Patrizia},
  year = 2024,
  pages = {360--383},
  publisher = {Springer Nature Switzerland},
  address = {Cham},
  doi = {10.1007/978-3-031-63790-2_28},
  isbn = {978-3-031-63790-2},
  langid = {english}
}

@inproceedings{gazzola2018,
  title = {Automatic Software Repair: A Survey},
  shorttitle = {Automatic Software Repair},
  booktitle = {{International Conference on Software Engineering}},
  author = {Gazzola, Luca and Micucci, Daniela and Mariani, Leonardo},
  year = 2018,
  month = may,
 % series = {{{ICSE}} '18},
  pages = {1219},
  publisher = {Association for Computing Machinery},
  address = {New York, NY, USA},
  doi = {10.1145/3180155.3182526},
  urldate = {2025-03-20},
  abstract = {Debugging software failures is still a painful, time consuming, and expensive process. For instance, recent studies showed that debugging activities often account for about 50\% of the overall development cost of software products [3]. There are many factors contributing to the cost of debugging, but the most impacting one is the extensive manual effort that is still required to identify and remove faults. So far, the automation of debugging activities essentially resulted in the development of techniques that provide useful insights about the possible locations of faults, the inputs and states of the application responsible for the failures, as well as the anomalous operations executed during failures. However, developers must still put a relevant effort on the analysis of the failed executions to exactly identify the faults that must be fixed. In addition, these techniques do not help the developers with the synthesis of an appropriate fix.},
  isbn = {978-1-4503-5638-1},
  file = {/home/y19056ba/Zotero/storage/U8K3Q4FP/Gazzola et al. - 2019 - Automatic Software Repair A Survey.pdf}
}

@article{giannakopoulou2020a,
  title = {Formal {{Requirements Elicitation}} with {{FRET}}},
  author = {Giannakopoulou, Dimitra and Pressburger, Thomas and Mavridou, Anastasia and Rhein, Julian and Schumann, Johann and Shi, Nija},
  year = 2020,
  journal = {International Working Conference on Requirements Engineering: Foundation for Software Quality},
  langid = {english}
}

@article{giannakopoulou2021,
  title = {Automated Formalization of Structured Natural Language Requirements},
  author = {Giannakopoulou, Dimitra and Pressburger, Thomas and Mavridou, Anastasia and Schumann, Johann},
  year = 2021,
  month = sep,
  journal = {Information and Software Technology},
  volume = {137},
  pages = {106590},
  issn = {0950-5849},
  doi = {10.1016/j.infsof.2021.106590},
  urldate = {2025-04-23}
}

@article{holzmann1997,
  title = {The Model Checker {{SPIN}}},
  author = {Holzmann, Gerard J.},
  year = 1997,
  month = may,
  journal = {IEEE Transactions on Software Engineering},
  volume = {23},
  number = {5},
  pages = {279--295},
  issn = {0098-5589, 1939-3520, 2326-3881},
  doi = {10.1109/32.588521},
  urldate = {2025-10-20},
  copyright = {https://ieeexplore.ieee.org/Xplorehelp/downloads/license-information/IEEE.html},
  langid = {english}
}

@inproceedings{howar2011,
  title = {Automata {{Learning}} with {{Automated Alphabet Abstraction Refinement}}},
  booktitle = {Verification, {{Model Checking}}, and {{Abstract Interpretation}}},
  author = {Howar, Falk and Steffen, Bernhard and Merten, Maik},
  editor = {Jhala, Ranjit and Schmidt, David},
  year = 2011,
  pages = {263--277},
  publisher = {Springer},
  address = {Berlin, Heidelberg},
  doi = {10.1007/978-3-642-18275-4_19},
  isbn = {978-3-642-18275-4},
  langid = {english}
}

@misc{iso2023,
  type = {80601-2-12},
  title = {Particular Requirements for Basic Safety and Essential Performance of Critical Care Ventilators},
  shorttitle = {{{ISO}} 80601-2-12},
  author = {{ISO}},
  year = 2023,
  number = {80601-2-12},
  urldate = {2025-12-01},
  langid = {english}
}

@article{koymans1990,
  title = {Specifying Real-Time Properties with Metric Temporal Logic},
  author = {Koymans, Ron},
  year = 1990,
  month = nov,
  journal = {Real-Time Systems},
  volume = {2},
  number = {4},
  pages = {255--299},
  issn = {1573-1383},
  doi = {10.1007/BF01995674},
  urldate = {2025-11-27},
  langid = {english}
}

@article{liu2010,
  title = {Modeling and {{Optimization}} of {{Adaptive Foraging}} in {{Swarm Robotic Systems}}},
  author = {Liu, Wenguo and Winfield, Alan F. T.},
  year = 2010,
  month = dec,
  journal = {The International Journal of Robotics Research},
  volume = {29},
  number = {14},
  pages = {1743--1760},
  publisher = {SAGE Publications Ltd STM},
  issn = {0278-3649},
  doi = {10.1177/0278364910375139},
  urldate = {2025-08-21},
  langid = {english}
}

@inproceedings{maoz2019,
  title = {Symbolic {{Repairs}} for {{GR}}(1) {{Specifications}}},
  booktitle = {{International Conference on Software Engineering}},
  author = {Maoz, Shahar and Ringert, Jan Oliver and Shalom, Rafi},
  year = 2019,
  month = may,
  pages = {1016--1026},
  issn = {1558-1225},
  doi = {10.1109/ICSE.2019.00106},
  urldate = {2025-05-23}
}

@inproceedings{mavridou2020,
  title = {The {{Ten Lockheed Martin Cyber-Physical Challenges}}: {{Formalized}}, {{Analyzed}}, and {{Explained}}},
  shorttitle = {The {{Ten Lockheed Martin Cyber-Physical Challenges}}},
  booktitle = {{International Requirements Engineering Conference}},
  author = {Mavridou, Anastasia and Bourbouh, Hamza and Giannakopoulou, Dimitra and Pressburger, Thomas and Hejase, Mohammad and Garoche, Pierre-Loïc and Schumann, Johann},
  year = 2020,
  month = aug,
  pages = {300--310},
  issn = {2332-6441},
  doi = {10.1109/RE48521.2020.00040},
  urldate = {2025-11-27}
}

@inproceedings{pressburger2023,
  title = {Authoring, {{Analyzing}}, and~{{Monitoring Requirements}} for~a~{{Lift-Plus-Cruise Aircraft}}},
  booktitle = {Requirements {{Engineering}}: {{Foundation}} for {{Software Quality}}},
  author = {Pressburger, Tom and Katis, Andreas and Dutle, Aaron and Mavridou, Anastasia},
  editor = {Ferrari, Alessio and Penzenstadler, Birgit},
  year = 2023,
  pages = {295--308},
  publisher = {Springer Nature Switzerland},
  address = {Cham},
  doi = {10.1007/978-3-031-29786-1_21},
  isbn = {978-3-031-29786-1},
  langid = {english}
}

@inproceedings{sheridan2025,
  title = {Sharper {{Specs}} for~{{Smarter Drones}}: {{Formalising Requirements}} with~{{FRET}}},
  shorttitle = {Sharper {{Specs}} for~{{Smarter Drones}}},
  booktitle = {Requirements {{Engineering}}: {{Foundation}} for {{Software Quality}}},
  author = {Sheridan, Ois{\'i}n and Becker, Leandro Buss and Farrell, Marie and Luckcuck, Matt and Monahan, Rosemary},
  editor = {Hess, Anne and Susi, Angelo},
  year = 2025,
  pages = {350--362},
  publisher = {Springer Nature Switzerland},
  address = {Cham},
  doi = {10.1007/978-3-031-88531-0_25},
  isbn = {978-3-031-88531-0},
  langid = {english}
}

@inproceedings{vazquez2024,
  title = {Robotics: {{A New Mission}} for~{{FRET Requirements}}},
  shorttitle = {Robotics},
  booktitle = {{{NASA Formal Methods}}},
  author = {V{\'a}zquez, Gricel and Mavridou, Anastasia and Farrell, Marie and Pressburger, Tom and Calinescu, Radu},
  editor = {Benz, Nathaniel and Gopinath, Divya and Shi, Nija},
  year = 2024,
  pages = {359--376},
  publisher = {Springer Nature Switzerland},
  address = {Cham},
  doi = {10.1007/978-3-031-60698-4_22},
  isbn = {978-3-031-60698-4},
  langid = {english}
}

\end{document}